\tikzstyle arrowstyle=[scale=1]
\tikzstyle directed=[postaction={decorate,decoration={markings,
    mark=at position .65 with {\arrow[arrowstyle]{stealth}}}}]
\tikzstyle reverse directed=[postaction={decorate,decoration={markings,
    mark=at position .65 with {\arrowreversed[arrowstyle]{stealth};}}}]
\newtheorem{teo}{Theorem}[section]
\newtheorem{lem}[teo]{Lemma}
\newtheorem{co}[teo]{Corollary}
\newtheorem{re}[teo]{Remark}
\newtheorem{de}[teo]{Definition}
\newtheorem{ex}[teo]{Example}
\newcommand{\X}{\mathcal{X}}
\renewcommand{\v}{\mathrm{v}}
\renewcommand{\u}{\mathrm{u}}
\newcommand{\V}{\mathrm{V}}
\newcommand{\T}{\mathrm{T}}
\newcommand{\e}{\mathrm{e}}
\newcommand{\E}{\mathrm{E}}
\begin{document}

\title[{The first eigenvalue of the $p-$Laplacian on quantum graphs}]{The first eigenvalue of the $p-$Laplacian on quantum graphs}
\author[L. M. Del Pezzo and J. D. Rossi]
{Leandro M. Del Pezzo and Julio D. Rossi}

\address{Leandro M. Del Pezzo and Julio D. Rossi
\hfill\break\indent
CONICET and Departamento  de Matem{\'a}tica, FCEyN,
Universidad de Buenos Aires,
\hfill\break\indent Pabellon I, Ciudad Universitaria (1428),
Buenos Aires, Argentina.}

\email{{\tt ldpezzo@dm.uba.ar,
jrossi@dm.uba.ar
}}

\keywords{$p-$Laplacian, quantum graphs, eigenvalues, shape derivative}
\thanks{
Leandro M. Del Pezzo was partially supported by UBACyT 20020110300067
and CONICET PIP 5478/1438  (Argentina) 
and Julio D. Rossi  was partially supported by MTM2011-27998,
(Spain) }

\maketitle

\begin{abstract}
We study the first eigenvalue of the $p-$Laplacian (with $1<p<\infty$) on a quantum graph with Dirichlet or 
Kirchoff boundary conditions on the nodes. We find lower and upper bounds for this 
eigenvalue when we prescribe the total sum of the lengths of the edges and the number of 
Dirichlet nodes of the graph. Also we find a formula for the shape derivative of the 
first eigenvalue (assuming that it is simple) when we perturb the graph 
by changing the length of an edge. Finally, we study in detail the limit cases $p\to \infty$ 
and $p\to 1$. 
\end{abstract}

\section{Introduction}

A quantum graph is a graph in which we associate a differential law with each edge. This 
differential law models the interaction between the two nodes defining each edge. The 
use of quantum graphs (as opposed to more elementary graph models, such as simple 
unweighted or weighted graphs) opens up the possibility of modeling the interactions 
between agents identified by the graph's vertices in a far more detailed manner than 
with standard graphs. Quantum graphs are now widely used in physics, chemistry and 
engineering (nanotechnology) problems, but can also be used, in principle, in the 
analysis of complex phenomena taking place on large complex networks, including social 
and biological networks. Such graphs are characterized by highly skewed degree 
distributions, small diameter and high clustering coefficients, and they have 
topological and spectral properties that are quite different from those of the highly 
regular graphs, or lattices arising in physics and chemistry applications. 
Quantum graphs are also used to model
thin tubular structures, so-called graph-like spaces, they are their natural 
limits, when the radius of a graph-like space tends to zero. On both, the graph-like 
spaces and the metric graph, we can naturally define Laplace-like differential 
operators. See \cite{Liviu,BK,K,olaf}. 

Among properties that are relevant in the study of quantum graphs is the study of the 
spectrum of the associated differential operator. In particular, the so-called spectral 
gap (this concerns bounds for the first nontrivial eigenvalue for the Laplacian with 
Neumann boundary conditions) has physical relevance and was extensively studied in 
recent years. See, for example, \cite{K,Ku,KN2,Ku22} and references therein.

In this paper we are interested in the eigenvalue problem that naturally arises when we 
consider the $p-$Laplacian, $(|u'|^{p-2} u')'$, as the differential law on each side of 
the graph together with Dirichlet boundary conditions on a subset of nodes of the graph 
and pure transmission (known as Kirchoff boundary conditions, \cite{cita2}) in the rest 
of the nodes. To be concrete, given $1<p<\infty$, we deal with the following problem: 
in a finite metric graph $\Gamma$ we consider a set of nodes $V_D$ and look for the 
minimization problem
\begin{equation}
	\label{eq:autovalor.intro}
	\lambda_{1,p}(\Gamma,\V_D)= \inf
	\left\{\dfrac{\displaystyle\int_{\Gamma} |u^{\prime}(x)|^p\, dx}
	{\displaystyle\int_{\Gamma} |u(x)|^p\, dx}\colon 
	u\in\mathcal{X}(\Gamma,\V_D), u\neq0\right\},
\end{equation}
where $\mathcal{X}(\Gamma,\V_D)\coloneqq 
\{v\in W^{1,p}(\Gamma)\colon v\mbox{ is continuous in }\Gamma, \, v=0 \mbox{ on } 
\V_D\}$.

There is a minimizer, see Section \ref{sect-first}, that is a nontrivial weak solution 
to 

\begin{equation}\label{eq:vc.intro}
\left\{	\begin{array}{ll}
	\displaystyle
		- (|u^\prime |^{p-2}
	u^\prime)^{\prime} (x) = \lambda_{1,p}(\Gamma,\V_D) |u|^{p-2} u (x) \quad &
	\mbox{ on the edges of } \Gamma,\\[6pt]
		u(\v)=0\quad & \forall \v\in \V_D,\\[6pt]
		\displaystyle\sum_{\e\in \E_{\v}(\Gamma)}  
		\left|\dfrac{\partial u}{\partial x_{\e}}(\v)\right|^{p-2}
		\dfrac{\partial u}{\partial x_{\e}}(\v)=0\quad & 
		\forall \v\in \V(\Gamma)\setminus \V_D.
	\end{array} \right.
\end{equation}

Our main results for this eigenvalue problem can be summarized as follows (we refer to 
the corresponding sections for precise statements):

\begin{itemize}

\item We show that there is a first eigenvalue with an associated nonnegative 
eigenfunction, that is, the infimum in \eqref{eq:autovalor.intro} is attained at a nonnegative function.
We provide examples that show that $\lambda_{1,p}(\Gamma,\V_D)$ can be a multiple eigenvalue or a simple eigenvalue depending on the graph.

\medskip

\item We find a sharp lower bound for the first eigenvalue that depends only on the 
total sum of the lengths of the edges of the graph, $\ell(\Gamma)$, namely
$$
		\lambda_{1,p}(\Gamma,\V_D)\ge C(p)
		\left(\dfrac{1}{\ell(\Gamma)}\right)^p ,
$$
here the constant $C(p)$ is explicit and depends only on $p$.

\medskip

\item We find a sharp upper bound for the first eigenvalue depending on the total sum of 
the lengths of the edges, $\ell(\Gamma)$, and the number of edges of the graph, 
$\mathrm{card}(E(\Gamma))$,
\[
		\lambda_{1,p}(\Gamma,\V_D)\le C(p)
		\left(\dfrac{\mathrm{card}(E(\Gamma))}{\ell(\Gamma)}\right)^p,
	\]
again the constant $C(p)$ is explicit and depends only on $p$.

\medskip

\item Under the assumption that the first eigenvalue is simple, we find a formula for 
its shape derivative when we perturb the graph by changing the length of an edge. In the case of a multiple eigenvalue, we provide 
examples that show that the first eigenvalue is not differentiable with respect to the 
lengths of the edges of the graph (but it is Lipschitz).

\medskip

\item We study the limit cases $p\to \infty$ and $p\to 1$. For $p=\infty$ we find a 
geometric characterization of the first eigenvalue and for $p=1$ we prove that there 
exist the analogous of Cheeger sets in quantum graphs.
\end{itemize}

Note that without a bound on the total length of the graph the first eigenvalue is 
unbounded from above and from below the optimal bound is zero
and without a bound on the number of Dirichlet nodes it is not bounded above even 
if we prescribe the total length. Therefore our 
results are also sharp in this sense. 
Also remark that our results are new even for the linear case $p=2$.

\medskip

Let us end this introduction with a brief discussion on ideas and techniques used in the 
proofs as well as a description of the previous bibliography.

Existence of eigenfunctions can be easily obtained from a compactness argument as for the 
usual $p-$Laplacian in a bounded domain of ${\mathbb{R}}^N$, see \cite{GP1}. However, 
in contrast to what happens in the usual case of a bounded domain, see \cite{anane}, the 
first eigenvalue is not simple, we show examples of this phenomena. 

Eigenvalues on quantum graphs are by now a classical subject with an increasing number of 
recent references, we quote \cite{bon,cita,Ku,Ku22}. The literature on eigenfunctions of 
the $p-$Laplacian, also called 
$p-$trigonometric functions, is now quite extensive: we refer in particular to 
\cite{11,12,13} and references therein. 

The upper and lower bounds comes from test functions arguments together with some 
analysis of the possible configurations of the graphs. 

For the shape derivative when we modify the length of one edge we borrow ideas from 
\cite{jorge-pepe}.

Concerning the limit as $p\to \infty$ for the eigenvalue problem of the $p-$Lapla-cian 
in the usual PDE case we refer to \cite{BD1,BD2,Juu,JLM2}. To obtain this limit the main 
point is to use adequate test functions to obtain bounds that are uniform in $p$ in order 
to gain compactness on a sequence of eigenfunctions. 

Finally, for $p=1$ we refer to \cite{CCN,FMP,parini}. In this limit problem the natural 
space that appear is that of bounded variation functions, see \cite{ambrosio}. Remark 
that when considering bounded variation functions we loose continuity.

\medskip

The paper is organized as follows: in Section \ref{sect.prelim} we collect some 
preliminaries; in Section 
\ref{sect-first} we deal with the first eigenvalue on a quantum graph and prove its 
upper and lower bounds; 
in Section \ref{sect-shape} we perform a shape derivative approach of the first 
eigenvalue showing that it is differentiable when we change the length of one edge and 
providing an explicit formula for this derivative;  
in Section \ref{sect-infty} we study the limit as $p \to \infty$ of the first 
eigenvalue while in the final section, Section \ref{sect-1} we look for the limit as 
$p\to 1$.

\section{Preliminaries.} \label{sect.prelim}
\subsection{Quantum Graphs} We collect here some basic knowledge about quantum graphs, 
see for instance 
\cite{BK} and references therein.

A graph $\Gamma$ consists of a finite or countable infinite set 
of vertices $\V(\Gamma)=\{\v_i\}$ and a set of edges $\E(\Gamma)=\{\e_j\}$ connecting the 
vertices. A graph $\Gamma$ is said a finite graph if the number of edges 
and the number of vertices are finite.

Two vertices $\u$ and $\v$ are called adjacent (denoted $\u\sim \v$) if there is an 
edge connecting them. An edge and a vertex on 
that edge are called incident. We will denote $\v\in \e$ when $\e$ and $\v$ are incident.
We define $\E_{\v}(\Gamma)$ as the set of all edges incident to $\v.$ The degree $d_{\v}(\Gamma)$ of a vertex 
$\V(\Gamma)$ is the number of edges that incident to it, where a loop 
(an edge that connects a vertex to itself) is counted twice.

We will say that $\v$ is a terminal vertex if 
there exists an unique vertex $\u\in\V(\Gamma)$ such that $\u\sim\v.$
Let us denote by $\T(\Gamma)$ the set of all terminal vertices.

A walk is a sequence of edges in which the end of each edge (except the last) is 
the beginning of the next. A trail is a walk in which no edge is repeated.
A path is a trail in which no vertex is repeated. A graph $\Gamma$ is said 
connected if a path exists between every pair of vertices, that is a graph which is 
connected in the sense of a topological space.

A graph $\Gamma$ is called a directed graph if each of its edges is assigned a direction.
In the remainder of the section, $\Gamma$ is a directed graph.

Each edge $\e$ can be identified with an ordered pair 
$(\v_{\e},\u_{\e})$ of vertices.The vertices $\v_{\e}$ and $\u_{\e}$ are the 
initial and terminal 
vertex of $\e.$ The edge $\hat{\e}$ is called the
reversal of the edge $\e$ if $\v_{\hat{\e}}=\u_{\e}$ and 
$\u_{\hat{\e}}=\v_{\e}.$ We define 
$$
	\widehat\E(\Gamma)\coloneqq\{\hat e\colon e\in\E(\Gamma)\}.
$$ 

The edge $\e$ is called outgoing (incoming) at a vertex $\v$ if 
$\v$ is the initial (terminal) vertex of $\e.$ The number of outgoing (incoming)
edges at a vertex $\v$ is called outgoing (incoming) degree and denoted $d_{\v}^o(\Gamma)$
($d_{\v}^i(\Gamma)$). Observe that $d_{\v}(\Gamma)=d_{\v}^o(\Gamma)
+d_{\v}^i(\Gamma).$ 

\begin{de}[See Definition 1.2.3 in \cite{BK}] 
A graph $\Gamma$ is said to be a metric graph, if
\begin{enumerate}
	\item each edge $\e$ is assigned a positive length $\ell_{\e}\in(0,+\infty];$
	\item the lengths of the edges that are reversals of each other are assumed
			to be equal, that is $\ell_{\e}=\ell_{\hat{\e}};$
	\item a coordinate $x_{\e}\in I_{\e}=[0,\ell_{\e}]$ 
			increasing in the direction of the edge
			is assigned on each edge;
	\item the relation $x_{\hat{\e}}=\ell_{\e}-x_{\e}$ holds between the coordinates on
			mutually reserved edges. 
\end{enumerate}
\end{de}

A finite metric graph whose edges all have finite 
lengths will be called compact. 
If a sequence of edges $\{\e_j\}_{j=1}^n$ forms a path, its length is 
defined as $\sum_{j=1}^n\ell_{\e_j}.$ For two vertices $\v$ and $\u,$ the distance $d(\v,\u)$ is defined as the 
minimal length of the path connected them. A compact 
metric graph $\Gamma$ becomes a metric measure space by defining the 
distance $d(x,y)$ of two points $x$ and $y$ of the graph (that are not necessarily vertices) to be the short path on $\Gamma$ connected these
points, that is
\[
	d(x,y) \coloneqq \inf\left\{
	\int_0^1 |\gamma' (t)| \, dt\colon
	\gamma\colon [0,1] \to \Gamma \mbox{ Lipschitz}, 
	\ \gamma(0) =x, \  \gamma(1)=y 
	\right\}.
\]
  
The length of a metric graph 
(denoted $\ell(\Gamma)$) is the sum of the length of all edges.

A function $u$ on a metric graph $\Gamma$ is a collection of functions $u_{\e}$ 
defined on
$(0,\ell_{\e})$ for all $\e\in \E(\Gamma),$ 
not just at the vertices as in discrete models.  

Let $1\le p\le \infty.$ We say that $u$ belongs to  $L^p(\Gamma)$ if 
$u_{\e}$ belongs to $L^p(0,\ell_{\e})$ for all $\e\in \E(\Gamma)$ and
\[
	\|u\|_{L^{p} (\Gamma)}^p\coloneqq\sum_{\e\in \E(\Gamma)}
	\|u_{\e}\|_{L^{p}(0,\ell_{\e})}^p<\infty.
\]
 The Sobolev space $W^{1,p}(\Gamma)$ is defined as 
the space of continuous functions $u$ on $\Gamma$ such  that 
$u_{\e}\in W^{1,p}(I_{\e})$ 
for all $\e\in \E(\Gamma)$ and
\[
	\|u\|_{W^{1,p}(\Gamma)}^p\coloneqq \sum_{\e\in \E(\Gamma)}
	\|u_{\e}\|_{L ^p(0,\ell_{\e})}^p+\|u^\prime_{\e}\|_{L ^p(0,\ell_{\e})}^p<\infty.
\] 
Observe that the continuity condition in the definition of $W^{1,p}(\Gamma)$ 
means that for 
each $\v\in \V(\Gamma),$ the function on all edges 
$\e\in \E_{\v}(\Gamma)$  assume the 
same value at $\v.$

The space $W^{1,p}(\Gamma)$ is a Banach space for $1 \le p \le\infty$. 
It is reflexive for $1 < p < \infty$ and separable for $1 \le p < \infty.$

\begin{teo}
	\label{teo:inccomp} Let $\Gamma$ be a compact graph and $1<p<\infty$. 
	The injection $W^{1,p}(\Gamma)\subset L^q(\Gamma)$
	is compact for all $1\le q\le\infty.$
\end{teo}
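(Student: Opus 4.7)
The plan is to reduce the statement to the classical one-dimensional compact Sobolev embedding on each edge and then use the finiteness of the edge set to assemble a globally convergent subsequence. Since $\Gamma$ is compact it has finitely many edges $\e_1,\dots,\e_m$, each of finite length $\ell_{\e_j}$, and a function $u\in W^{1,p}(\Gamma)$ restricts on each edge to an element $u_{\e_j}\in W^{1,p}(0,\ell_{\e_j})$ with $\|u\|_{W^{1,p}(\Gamma)}^p=\sum_j\|u_{\e_j}\|_{W^{1,p}(0,\ell_{\e_j})}^p$.

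First I would recall the one-dimensional fact that, for $1<p<\infty$ and a bounded interval $I$, the embedding $W^{1,p}(I)\hookrightarrow C(\overline{I})$ is compact: a bounded sequence is uniformly bounded by the $1$D Sobolev inequality and equicontinuous via $|u(x)-u(y)|\le |x-y|^{1-1/p}\|u'\|_{L^p(I)}$, so Arzel\`a--Ascoli supplies a uniformly convergent subsequence. Given a bounded sequence $\{u_n\}\subset W^{1,p}(\Gamma)$, each restriction $\{(u_n)_{\e_j}\}$ is bounded in $W^{1,p}(0,\ell_{\e_j})$, hence admits a subsequence converging uniformly on $\overline{I_{\e_j}}$. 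Applying this edge by edge and extracting a diagonal subsequence (which is possible because there are only $m$ edges), I obtain one subsequence $\{u_{n_k}\}$ that converges uniformly on every edge to limits $u_{\e_j}\in C(\overline{I_{\e_j}})$.

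Next I would check that the resulting limit $u$ lies in $C(\Gamma)$ and that the convergence is uniform on all of $\Gamma$. Continuity at each vertex $\v$ follows because uniform convergence on each incident edge preserves the matching values at the endpoint $\v$: since $(u_{n_k})_{\e}(\v)=(u_{n_k})_{\e'}(\v)$ for every pair $\e,\e'\in \E_{\v}(\Gamma)$, passing to the limit gives the same equality for $u$. Uniform convergence on each of the finitely many edges then implies $\|u_{n_k}-u\|_{L^{\infty}(\Gamma)}\to 0$, which is exactly the case $q=\infty$.

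Finally, for $1\le q<\infty$, I would use that $\Gamma$ has finite total length $\ell(\Gamma)<\infty$, so
\[
\|u_{n_k}-u\|_{L^{q}(\Gamma)}^{q}\le \ell(\Gamma)\,\|u_{n_k}-u\|_{L^{\infty}(\Gamma)}^{q}\to 0,
\]
establishing $L^q$ convergence for every $q\in[1,\infty]$. There is no real obstacle in the argument; the only point requiring care is the diagonal/assembly step, which is painless precisely because compactness of $\Gamma$ forces the edge set to be finite. Without that finiteness one could not interchange the countable extraction of subsequences with the passage to a global limit.
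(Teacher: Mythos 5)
Your proof is correct. The paper itself gives no proof of this theorem: it is stated in the preliminaries as a standard fact about Sobolev spaces on compact metric graphs (the section points the reader to the book of Berkolaiko and Kuchment \cite{BK} for this background), so there is no argument in the text to compare against. Your route --- Morrey-type H\"older continuity $|u(x)-u(y)|\le |x-y|^{1-1/p}\|u'\|_{L^p}$ on each edge, Arzel\`a--Ascoli edge by edge, finitely many nested extractions, preservation of the vertex matching under uniform convergence, and then $\|\cdot\|_{L^q}^q\le \ell(\Gamma)\|\cdot\|_{L^\infty}^q$ for finite $q$ --- is the standard one and uses exactly the two finiteness hypotheses packed into the word ``compact'' (finitely many edges, each of finite length). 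The only cosmetic remark is that with finitely many edges you do not need a genuine diagonal argument, just finitely many successive extractions, as you yourself note.
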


A quantum graph is a metric graph $\Gamma$ equipped with a 
differential operator 
$\mathcal{H},$ accompanied by a vertex conditions.
In this work, we will consider 
\[
	\mathcal{H}(u)(x)\coloneqq -\Delta_p u(x)= -
	(|u^{\prime}(x)|^{p-2} u^{\prime}(x))^{\prime}.
\]
Given $\V_D$ a  non empty subset of $\V(\Gamma),$  our vertex 
conditions are the following
\begin{equation}\label{eq:vc}
	\begin{cases}
		u(x)\mbox{ is continuous in }\Gamma,\\
		u(\v)=0\quad\forall \v\in \V_D,\\
		\displaystyle\sum_{\e\in \E_{\v}(\Gamma)}  \left|\dfrac{\partial u}{\partial x_{\e}}(\v)\right|^{p-2}
		\dfrac{\partial u}{\partial x_{\e}}(\v)=0\quad 
		\forall \v\in \V(\Gamma)\setminus \V_D,
	\end{cases}
\end{equation}
where the derivatives are assumed to be taken in the direction 
away from the vertex. 

Throughout this work, $ \int_{\Gamma} u(x)\, dx$ denotes
$ \sum_{\e\in \E(\Gamma)} \int_{0}^{\ell_{\e}} u_{\e}(x)\, dx$.

\subsection{Eigenvalues of the $p-$Laplacian in $\mathbb{R}$}
Here we present a brief review concerning eigenvalues 
of the 1-dimensional $p-$Laplacian. For a more elaborate
treatment we refer the reader to \cite{LE}.

Let $p\in(1,+\infty).$ 
Given $L>0,$ all eigenvalues $\lambda$ of the Dirichlet problem
\[
	\begin{cases}
		-(|u^{\prime}|^{p-2} u^{\prime})^{\prime}=\lambda
		|u|^{p-2}u &\text{ in } (0,L),\\
		u(0)=u(L)=0,
	\end{cases}
\]
are of the form
\[
	\lambda_{n,p}=\left(\dfrac{n\pi_{p}}{L}\right)^p
	\dfrac{p}{p^\prime}\quad\forall n\in\mathbb{N}
\]
with corresponding eigenfunctions
\[
	u_{n}(x)=\dfrac{\alpha L}{n\pi_p}\sin_p\left(\dfrac{n\pi_p}{L}x\right),
	\quad \alpha\in\mathbb{R}\setminus\{0\}
\]
where 
$
	\pi_p=\frac{2\pi}{p\sin(\nicefrac{\pi}p)}$,
$\nicefrac{1}{p}+\nicefrac{1}{p^\prime}=1,$ and $\sin_p$ is the $p-$sine 
function. 

Then the first Dirichlet eigenvalue is 
\begin{equation}
	\label{eq:primerautov}
	\lambda_{1,p}=\left(\dfrac{\pi_{p}}{L}\right)^p
	\dfrac{p}{p^\prime},
\end{equation}
and has a positive eigenfunction (any other eigenvalue has eigenfunctions that
change sign). 

\begin{re}
	Observe that $\{\lambda_{n,p}\}$ coincides with the Dirichlet eigenvalues 
	of the Laplacian when $p=2$.
\end{re}

\section{The first eigenvalue on a quantum graph.} \label{sect-first}

Let $\Gamma$ be a compact connected quantum graph and $\V_D$
be a non-empty subset of $\V(\Gamma).$ We say that  the value 
$\lambda\in\mathbb{R}$ is an eigenvalue of  the $p-$Laplacian
if there exists non trivial function
$u\in \mathcal{X}(\Gamma,\V_D)\coloneqq 
\{v\in W^{1,p}(\Gamma)\colon v=0 \mbox{ on } \V_D\}$ 
such that
\[
	\int_{\Gamma}|u^\prime(x)|^{p-2}
	u^\prime(x)w^\prime(x)\, dx=\lambda 
	\int_{\Gamma}|u(x)|^{p-2}
	u(x)w(x)\, dx
\]
for all $w\in\mathcal{X}.$ In which case, $u$ is called an 
eigenfunction associated to $\lambda.$

Recall from the introduction that the first eigenvalue of the $p-$Laplacian is given by
\begin{equation}
	\label{eq:autovalor}
	\lambda_{1,p}(\Gamma,\V_D)= \inf
	\left\{\dfrac{\displaystyle\int_{\Gamma} |u^{\prime}(x)|^p\, dx}
	{\displaystyle\int_{\Gamma} |u(x)|^p\, dx}\colon 
	u\in\mathcal{X}(\Gamma,\V_D), u\neq0\right\}.
\end{equation}

By a standard compactness argument, it follows that there exists
an eigenfunction associated to $\lambda_{1,p}(\Gamma,\V_D)$. Note that
when $V_D\neq \emptyset$ the norm in $W^{1,p}(\Gamma)$ is equivalent to 
$(\int_{\Gamma} |u^{\prime}|^p)^{1/p} = ( \sum_{\e\in \E(\Gamma)}
\|u^\prime_{\e}\|_{L ^p(0,\ell_{\e})}^p)^{1/p}$.

\begin{teo}
Let $\Gamma$ be a compact connected quantum graph, $\V_D$
be a non-empty subset of $\V(\Gamma)$ and $p\in(1,+\infty)$. 
Then there exists a non-negative $u_0\in\mathcal{X}(\Gamma,\V_D)$ such that 
\[
	\lambda_{1,p}(\Gamma,\V_D)= \dfrac{\displaystyle\int_{\Gamma} |u_0 ^{\prime}(x)|^p\, dx}
	{\displaystyle\int_{\Gamma} |u_0(x)|^p\, dx}.
\]
Moreover, $u_0$ is an eigenfunction associated to $\lambda_{1,p}(\Gamma,V_D).$
\end{teo}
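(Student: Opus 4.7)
The plan is the standard direct method in the calculus of variations, adapted to the metric graph setting, followed by a truncation argument for nonnegativity and a Lagrange multiplier computation for the eigenvalue equation.

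First, I would take a minimizing sequence $\{u_n\}\subset\mathcal{X}(\Gamma,\V_D)$ for the Rayleigh quotient in \eqref{eq:autovalor}, normalized so that $\|u_n\|_{L^p(\Gamma)}=1$. Then $\int_\Gamma|u_n'|^p\,dx \to \lambda_{1,p}(\Gamma,\V_D)$, so $\{u_n\}$ is bounded in $W^{1,p}(\Gamma)$ (using the remark right before the theorem, which states that on $\mathcal{X}(\Gamma,\V_D)$ the seminorm $\|u'\|_{L^p(\Gamma)}$ is equivalent to the full $W^{1,p}$ norm when $\V_D\neq\emptyset$). By reflexivity of $W^{1,p}(\Gamma)$ I extract a subsequence with $u_n\rightharpoonup u_0$ weakly in $W^{1,p}(\Gamma)$, and by the compact embedding (Theorem \ref{teo:inccomp}) $u_n\to u_0$ strongly in $L^p(\Gamma)$, so $\|u_0\|_{L^p(\Gamma)}=1$ and in particular $u_0\not\equiv 0$.

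Next I need $u_0\in\mathcal{X}(\Gamma,\V_D)$. Since $\mathcal{X}(\Gamma,\V_D)$ is the intersection of $W^{1,p}(\Gamma)$ with finitely many continuous linear evaluation conditions $v\mapsto v(\v)$ for $\v\in\V_D$ (these are continuous on $W^{1,p}(\Gamma)$ by the Sobolev embedding on each edge), it is weakly closed, so $u_0\in\mathcal{X}(\Gamma,\V_D)$. By weak lower semicontinuity of the $L^p$ norm of the derivative,
\[
\int_\Gamma|u_0'|^p\,dx \le \liminf_{n\to\infty}\int_\Gamma|u_n'|^p\,dx = \lambda_{1,p}(\Gamma,\V_D),
\]
which combined with the definition of the infimum yields equality, so $u_0$ attains \eqref{eq:autovalor}.

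For nonnegativity, I would replace $u_0$ by $|u_0|$: on each edge $u_0\in W^{1,p}(0,\ell_\e)$ implies $|u_0|\in W^{1,p}(0,\ell_\e)$ with $||u_0||'=\operatorname{sgn}(u_0)u_0'$ a.e., so $\int_\Gamma||u_0|'|^p\,dx=\int_\Gamma|u_0'|^p\,dx$ and $\int_\Gamma||u_0||^p\,dx=\int_\Gamma|u_0|^p\,dx$. Continuity of $|u_0|$ across each vertex follows from continuity of $u_0$, and $|u_0|=0$ on $\V_D$, so $|u_0|\in\mathcal{X}(\Gamma,\V_D)$ is also a minimizer. Finally, that any minimizer is an eigenfunction is the standard Lagrange multiplier / Euler--Lagrange computation: for every $w\in\mathcal{X}(\Gamma,\V_D)$ the function $t\mapsto\int_\Gamma|u_0'+tw'|^p\,dx\Big/\int_\Gamma|u_0+tw|^p\,dx$ has a minimum at $t=0$, and differentiating gives the weak formulation with eigenvalue $\lambda_{1,p}(\Gamma,\V_D)$.

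The only mildly delicate point is checking that the evaluation functionals at vertices are well-defined and continuous on $W^{1,p}(\Gamma)$ (so that $\mathcal{X}(\Gamma,\V_D)$ is weakly closed) and that the equivalence of norms on $\mathcal{X}(\Gamma,\V_D)$ used to obtain boundedness in $W^{1,p}$ from the normalization is justified; both follow from the one-dimensional Sobolev embedding on each edge together with connectedness of $\Gamma$ and nonemptiness of $\V_D$, but they are the small nontrivial inputs that make the otherwise routine variational argument go through.
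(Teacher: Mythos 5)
Your proposal is correct and follows essentially the same route as the paper: a normalized minimizing sequence, weak compactness in $W^{1,p}(\Gamma)$ plus the compact embedding of Theorem \ref{teo:inccomp}, weak lower semicontinuity, and the observation that a minimizer satisfies the Euler--Lagrange (weak eigenfunction) equation. The only cosmetic difference is that the paper replaces $u_n$ by $|u_n|$ at the start of the sequence while you take $|u_0|$ of the limit; both are valid and your extra remarks on the weak closedness of $\mathcal{X}(\Gamma,\V_D)$ only make explicit what the paper leaves implicit.
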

\begin{proof}
Let $\{u_n\}_{n\in\mathbb{N}}\subset\mathcal{X}(\Gamma,V_D)$ 
be a minimizing sequence for $\lambda_{1,p}(\Gamma,\V_D),$ that is,
\[
	\lambda_{1,p}(\Gamma,\V_D)=\lim_{n\to\infty}\int_{\Gamma} |u_n^{\prime}(x)|^p\, dx, \quad
	\int_{\Gamma} |u_n(x)|^p\, dx=1\quad \forall n\in\mathbb{N}.
\] 
Note that we can assume that $u_n\ge0.$
Then, there exists $C>0$ such that $\|u_n\|_{W^{1,p}(\Gamma)}\le C$ for all $n\in\mathbb{N}.$
Therefore, using that $\X(\Gamma,\V_D)$ is a reflexive space and Theorem \ref{teo:inccomp},  there exist $u_0\in\mathcal{X}(\Gamma,V_D)$ and a subsequence that will still call $\{u_n\}_{n\in\mathbb{N}}$
such that
\begin{align}
	u_n\rightharpoonup u_0, &\mbox{ weakly in } 
	\X(\Gamma,\V_D),\label{convdebil}\\
	u_n\to u_0, &\mbox{ strongly in }L^p(\Gamma).\label{convfuerte}
\end{align}
As $\|u_n\|_{L^p{(\Gamma)}}=1$ for all $n\in\mathbb{N},$ by \eqref{convfuerte}, we have that $\|u_0\|_{L^p(\Gamma)}=1.$
Then $u_0\neq 0.$ 

On the other hand, by \eqref{convdebil}, 
\[
	\lambda_{1,p}(\Gamma,\V_D)=\lim_{n\to\infty}\int_{\Gamma} |u_n^{\prime}(x)|^p\, dx\ge \int_{\Gamma} |u^{\prime}_0(x)|^p\, dx. 
\]
Then, by \eqref{eq:autovalor}, we get
\[
	\lambda_{1,p}(\Gamma,\V_D)=\int_{\Gamma} |u^{\prime}_0(x)|^p\, dx. 
\]
Finally, it is clear that $u_0$ is an eigenfunction of the $p-$Laplacian associated to 
$\lambda_{1,p}(\Gamma,\V_D).$
\qed
\end{proof}

\begin{re}
	Note that, if $\V_D\subset\V_{D}^\prime\subset\V(\Gamma)$ then 
	$\lambda_1(\Gamma,\V_D)\le \lambda_1(\Gamma,\V_D^\prime),$ due to
	$\X(\Gamma,V_D^\prime)\subset\X(\Gamma,V_D).$	
\end{re}

Our next result shows that the first eigenvalue is simple if the Dirichlet
vertices are terminal vertices.

\begin{teo}\label{teo:simplicidad}
	Let $\Gamma$ be a compact connected quantum graph such that 
	$\T(\Gamma)\neq\emptyset$, and $p\in(1,+\infty).$ 
	If $\V_D\subseteq\T(\Gamma)$ is non-empty
	then the eigenfunctions associated to $\lambda_{1,p}(\Gamma,\V_D)$  
	do not change sign and, in addition, $\lambda_{1,p}(\Gamma,\V_D)$ is simple. 
	Here $\mathrm{card}(\V(\Gamma))$ is the cardinal number of $\V(\Gamma).$
\end{teo}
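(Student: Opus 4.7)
If $u$ is any first eigenfunction, then $|u|$ also lies in $\mathcal{X}(\Gamma,\V_D)$ (the Dirichlet and continuity conditions are preserved) and satisfies $\bigl||u|'\bigr|=|u'|$ a.e.\ on each edge, so $|u|$ has the same Rayleigh quotient and is itself a first eigenfunction. Hence to prove no-change-of-sign it suffices to prove the sharper statement: every non-negative non-trivial first eigenfunction $w$ is strictly positive on $\Gamma\setminus\V_D$; applied to $|u|$ this forces $u$ to be non-zero on the connected set $\Gamma\setminus\V_D$, and so $u$ has constant sign there.

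\textbf{Step 2: strict positivity, using $\V_D\subseteq\T(\Gamma)$.} Let $A\subseteq\E(\Gamma)$ denote the set of edges on which $w\equiv 0$. On each edge, $w$ solves the 1-D equation $-(|w'|^{p-2}w')'=\lambda_{1,p}\,w^{p-1}$, and the classical 1-D strong maximum principle (see \cite{LE}) shows that interior vanishing implies $w\equiv 0$ on the whole edge. At a non-Dirichlet vertex $v$ with $w(v)=0$, the inequality $w\ge 0$ gives $\partial w/\partial x_{\e}(v)\ge 0$ on every $\e\in\E_{\v}(\Gamma)$; the Kirchoff condition in \eqref{eq:vc} then forces $\partial w/\partial x_{\e}(v)=0$ for all such $\e$, and uniqueness for the Cauchy problem of the 1-D $p$-Laplacian with data $(w(v),w'(v))=(0,0)$ yields $w\equiv 0$ on every $\e\in\E_{\v}(\Gamma)$. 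Hence if $\e\in A$ has a non-Dirichlet endpoint $v$, then $\E_{\v}(\Gamma)\subseteq A$. Suppose $A\neq\emptyset$ and $A\neq\E(\Gamma)$; connectedness of $\Gamma$ supplies a vertex $v$ incident to both an $A$-edge and a non-$A$-edge, and the previous implication forces $v\in\V_D\subseteq\T(\Gamma)$. But a terminal vertex has degree $1$, contradicting that $v$ is incident to two distinct edges. Hence $A=\emptyset$, so $w>0$ on $\Gamma\setminus\V_D$.

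\textbf{Step 3: simplicity via hidden convexity.} Let $u_1,u_2$ be two first eigenfunctions; by Steps 1--2 each is, up to sign, strictly positive on $\Gamma\setminus\V_D$, so after replacing by $\pm u_i$ and normalizing by $\int_\Gamma u_i^p\,dx=1$ I may assume both are positive. Set $w_t:=(t\,u_1^p+(1-t)\,u_2^p)^{1/p}$ for $t\in(0,1)$. The hidden-convexity inequality of Belloni--Kawohl type, proved pointwise on each edge after the standard regularization $u_i\leadsto u_i+\epsilon$ and $\epsilon\to 0$, gives
\begin{equation*}
|w_t'|^p\ \le\ t\,|u_1'|^p+(1-t)\,|u_2'|^p,
\end{equation*}
with equality on an edge if and only if $u_1$ and $u_2$ are proportional on that edge. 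Since $w_t\in\mathcal{X}(\Gamma,\V_D)$ and $\int_\Gamma w_t^p\,dx=1$, the variational characterization \eqref{eq:autovalor} gives $\lambda_{1,p}\le\int_\Gamma|w_t'|^p\,dx\le\lambda_{1,p}$, forcing equality in the displayed inequality edge-by-edge; continuity of $u_1,u_2$ at interior vertices and the normalization then imply $u_1\equiv u_2$.

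\textbf{Main obstacle.} The substantive content lies in Step 2: the 1-D strong maximum principle and Cauchy-problem uniqueness can be quoted from \cite{LE}, but the coupling across edges through the Kirchoff condition has to be handled carefully, and the hypothesis $\V_D\subseteq\T(\Gamma)$ enters decisively --- the degree-one rigidity of terminal vertices is precisely what terminates the propagation argument and prevents a ``Dirichlet pinch'' at an interior vertex, i.e.\ exactly the configuration that, as announced in the introduction, produces the multiplicity examples in the general case.
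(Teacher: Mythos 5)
Your proof is correct and follows essentially the same route as the paper: reduce to a non-negative eigenfunction via $|u|$, propagate strict positivity across the graph using the one-dimensional strong maximum principle together with Hopf's lemma and the Kirchoff condition (with $\V_D\subseteq\T(\Gamma)$ ensuring the propagation can only stop at degree-one Dirichlet nodes), and then deduce simplicity from the Lindqvist hidden-convexity inequality, which the paper handles by citing \cite{11}. Your reformulation of the positivity step as a dichotomy on the set of null edges is a somewhat cleaner writeup of the paper's forward propagation argument, but it is not a different method.
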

\begin{proof}
	Let $u$ be an eigenfunction associated to $\lambda_{1,p}(\Gamma,\V_D).$
	We have that $|u|$ is also a minimizer of \eqref{eq:autovalor}. Then,
	without loss of generality, we can assume that $u\ge0$ in $\Gamma.$
	
	Let $\v\in\V_D$ and $\u\in \V(D)$ such that $\v\sim \u$ and
	$u\neq0$ in $I_{e_0}$ where $e_0\in\E(\Gamma)$ and $\v,\u\in e_0.$
	Then, by the maximum principle (see \cite{vazquez}), we have that
	$u>0$ in $(0,\ell_e).$ Moreover if $u(\u)=0$, 
	by Hopf's lemma, $u^\prime(\u)>0,$ and this contradicts the Kirchhoff conditions 
	at $\u.$ Hence $u(\u)>0.$ 
	Then $u>0$ in $(0,\ell_e)$ for all $e\in E_{\u}(\Gamma).$ 
	We continue in this fashion
	obtaining $u>0$ in $\Gamma$. Once we have that every eigenfunction does 
	not change sign we get
	simplicity for $\lambda_{1,p}(\Gamma,\V_D)$ arguing as in \cite{11}.
	\qed
\end{proof}

\begin{re}\label{re:nosimple}
	In general, the first eigenvalue is not simple.
	For example,  let $\Gamma$ be a simple graph with 3 vertices
	and 2 edges, that is $\V(\Gamma)=\{\v_1,\v_2, v_3\}$ and 
	$\E(\Gamma)=\{[\v_1,\v_2], [\v_2,\v_3]\}.$ Let $\V_D=\{\v_1,\v_2,\v_3\}.$
	
	\begin{center}
	\begin{tikzpicture}
			 \node (a0) at (3,.5) {$\Gamma$};
  		    \node  [fill, circle,draw, scale=.5]  (a) at (0,0) {};
  			\node  (a1) at (1.5,-0.4)  { $L$ };
  			\node  (a1) at (4.5,-0.4)  { $L$ };
  			\node  (c) at (1.5,0)  {};
  			\node  [fill, circle,draw, scale=.5]  (b) at (3,0)  {};
  			\node  [fill, circle,draw, scale=.5]  (d) at (6,0)  {};
  			\node  (a2) at (0.1,-0.4)  { $\v_1$ };
  			\node  (b1) at (3,-0.4)  { $\v_2$ };
  			\node  (d1) at (6.1,-0.4)  { $\v_3$ };
  			\draw[directed,ultra thick] (a) -- (b);
  			\draw[directed,ultra thick] (b) -- (d);	
		\end{tikzpicture}
	\end{center}
	Then $\lambda_{1,p}(\Gamma,V_{D})=
	\left(\dfrac{\pi_{p}}{L}\right)^p\dfrac{p}{p^\prime}$ and
	\begin{align*}
		&u(x)=\begin{cases}
			\dfrac{L}{\pi_p}\sin_p\left(\dfrac{\pi_p}{L}t\right), 
			&\text{ if } x\in I_{[\v_1,\v_2]}=[0,L],\\
			0 &\text{otherwise},
		\end{cases}\\
		&v(x)=\begin{cases}
			\dfrac{L}{\pi_p}\sin_p\left(\dfrac{\pi_p}{L}t\right), 
			&\text{ if } x\in I_{[\v_2,\v_3]}=[0,L],\\
			0 &\text{otherwise},
		\end{cases}
	\end{align*}
	are two linearly independent eigenfunctions associated to 
	$\lambda_{1,p}(\Gamma,V_D).$ The reason for this lack of simplicity is 
	that the vertex $\v_2$ can be understood as a node that disconnects $\Gamma.$	
\end{re}

Now, we give a lower bound for the first eigenvalue of the $p-$Laplacian 
which does not depend on $\V(\Gamma),$ $\E(\Gamma)$ and $\V_D$. 
For the proof of the next theorem
we follow the ideas of \cite{KN}.
\begin{teo}
	\label{teo:cotainf} Let $\Gamma$ be a connected compact metric graph,
	$\V_D$ be a non-empty subset of $\V(\Gamma)$ and
	$p\in(1,+\infty).$
	Then
	\[
		\lambda_{1,p}(\Gamma,\V_D)\ge 
		\left(\dfrac{\pi_p}{2\ell(\Gamma)}\right)^p\dfrac{p}{p^\prime}.
	\]
\end{teo}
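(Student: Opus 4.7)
The plan is to reduce the bound to the one-dimensional Dirichlet eigenvalue formula \eqref{eq:primerautov} through a monotone rearrangement of a non-negative eigenfunction. By the existence theorem proved above, fix a non-negative minimizer $u\in\X(\Gamma,\V_D)$ and denote by $u^*\colon[0,\ell(\Gamma)]\to[0,\infty)$ its monotone decreasing rearrangement, viewing $\Gamma$ as a metric measure space of total mass $\ell(\Gamma)$. Equimeasurability yields $\int_\Gamma |u|^p\,dx = \int_0^{\ell(\Gamma)} |u^*|^p\,ds$, and because $u$ vanishes on the non-empty set $\V_D$ the essential infimum of $u$ is zero, so $u^*(\ell(\Gamma))=0$.

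The main technical step is a P\'olya--Szeg\H o type inequality,
\[ \int_\Gamma |u'(x)|^p\,dx \ge \int_0^{\ell(\Gamma)}|(u^*)'(s)|^p\,ds, \]
which I would prove using the one-dimensional co-area formula applied edge by edge. Setting $\mu(t)\coloneqq|\{u>t\}|$ and $N(t)\coloneqq\#\{x\in\Gamma : u(x)=t\}$, for a.e.\ regular value $t\in(0,\max u)$,
\[ \int_\Gamma |u'|^p\,dx = \int_0^{\max u}\sum_{u(x)=t}|u'(x)|^{p-1}\,dt, \qquad -\mu'(t)=\sum_{u(x)=t}\frac{1}{|u'(x)|}. \]
H\"older's inequality with exponents $p$ and $p^\prime$ applied to the finite-sum identity $1=|u'|^{(p-1)/p}\cdot|u'|^{-(p-1)/p}$ on the level set $\{u=t\}$ gives $N(t)^p\le\bigl(\sum_{u=t}|u'|^{p-1}\bigr)(-\mu'(t))^{p-1}$. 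Continuity of $u$, together with $u\equiv 0$ on $\V_D$ and $\max u>0$, forces the open superlevel set $\{u>t\}$ to have non-empty boundary, so $N(t)\ge 1$ for a.e.\ such $t$. The standard rearrangement identity $\int_0^{\ell(\Gamma)}|(u^*)'|^p\,ds = \int_0^{\max u}(-\mu'(t))^{-(p-1)}\,dt$ then delivers the P\'olya--Szeg\H o inequality.

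With the rearrangement inequality at hand, I extend $u^*$ by even reflection to $\tilde u(s)\coloneqq u^*(|s|)$ on $[-\ell(\Gamma),\ell(\Gamma)]$. Then $\tilde u\in W^{1,p}(-\ell(\Gamma),\ell(\Gamma))$, $\tilde u(\pm\ell(\Gamma))=0$, and the energies double: $\int_{-\ell(\Gamma)}^{\ell(\Gamma)}|\tilde u|^p\,ds = 2\int_0^{\ell(\Gamma)}|u^*|^p\,ds$ and the analogous identity for the derivative. Using $\tilde u$ as a competitor for the Dirichlet eigenvalue on an interval of length $2\ell(\Gamma)$ and invoking \eqref{eq:primerautov} with $L=2\ell(\Gamma)$,
\[ \lambda_{1,p}(\Gamma,\V_D) = \frac{\int_\Gamma|u'|^p\,dx}{\int_\Gamma|u|^p\,dx} \ge \frac{\int_0^{\ell(\Gamma)}|(u^*)'|^p\,ds}{\int_0^{\ell(\Gamma)}|u^*|^p\,ds} = \frac{\int_{-\ell(\Gamma)}^{\ell(\Gamma)}|\tilde u'|^p\,ds}{\int_{-\ell(\Gamma)}^{\ell(\Gamma)}|\tilde u|^p\,ds} \ge \left(\frac{\pi_p}{2\ell(\Gamma)}\right)^p\frac{p}{p^\prime}. \]

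The main obstacle is making the rearrangement step rigorous on the quantum graph: one needs an edge-wise Sard-type statement to discard critical values, membership $u^*\in W^{1,p}(0,\ell(\Gamma))$, and the change-of-variables formula connecting $(u^*)'$ with $\mu$. In contrast to higher-dimensional symmetrization, the one-dimensional structure of each edge makes the level sets finite for a.e.\ $t$, which reduces the H\"older estimate to an elementary finite-sum inequality and turns the whole argument into 1D analysis on each edge plus bookkeeping of how the level sets distribute across $\Gamma$.
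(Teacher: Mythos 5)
Your argument is correct in outline, but it takes a genuinely different route from the paper's. The paper never touches rearrangements: it doubles every edge of $\Gamma$ so that all vertex degrees become even, notes that the symmetric extension of any test function leaves the Rayleigh quotient unchanged (whence $\lambda_{1,p}(\widetilde\Gamma,\V_D)\le\lambda_{1,p}(\Gamma,\V_D)$), invokes the Euler--Hierholzer theorem to traverse $\widetilde\Gamma$ by a single closed path of length at most $2\ell(\Gamma)$ based at a Dirichlet vertex, and then compares with the one-dimensional Dirichlet eigenvalue \eqref{eq:primerautov} on that loop. You instead pass through the decreasing rearrangement $u^*$ of a nonnegative minimizer and a P\'olya--Szeg\H{o} inequality on the metric graph, the factor $2$ entering through the bound $N(t)\ge 1$ (a superlevel set may have a single boundary point) rather than through the doubling of edges. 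Both arguments are sound and yield the same constant. The paper's is more elementary, needing only test-function surgery plus a classical combinatorial fact; yours is heavier on one-dimensional real analysis (a.e.\ finiteness of level sets, Sard-type removal of critical values, membership $u^*\in W^{1,p}(0,\ell(\Gamma))$, the change-of-variables identity for $(u^*)'$ in terms of $\mu$), all standard in 1D but not free, and you rightly flag these as the points requiring care. In exchange, your method localizes where sharpness is lost (equality forces $N(t)=1$ a.e.\ and equality in H\"older on each level set, which essentially pins down the single-interval configuration of Example \ref{ex:cotainf}) and adapts more readily to weighted Rayleigh quotients. Two small steps worth writing out explicitly: $u^*(\ell(\Gamma))=0$ requires observing that, since $u$ is continuous and vanishes at a Dirichlet vertex, every sublevel set $\{u<t\}$ with $t>0$ is a nonempty open subset of $\Gamma$ and hence has positive measure, so $\operatorname{ess\,inf} u=0$; and $N(t)\ge 1$ for every $t\in(0,\max u)$ comes from the intermediate value theorem along a path joining a maximum point to $\V_D$, which is where connectedness of $\Gamma$ is used.
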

\begin{proof}
Let $\widetilde{\Gamma}$ be a metric graph obtained from $\Gamma$ 
by doubling each edge. Then 
$\E(\widetilde{\Gamma})= \E(\Gamma)\cup\widehat{\E}(\Gamma),$
$\V(\widetilde{\Gamma})= \V(\Gamma),$ and $d_{\v}(\widetilde{\Gamma})$ 
is even for all $\v\in\V(\widetilde{\Gamma}).$

On the other hand, given $u\in\X(\Gamma,\V_D)$ we can define 
$\widetilde{u}\in\X(\widetilde{\Gamma},\V_{D})$ such that
\[
\begin{aligned}
	\widetilde{u}_e(x_{\e})&= u_e(x_{\e}) 
	&\forall x_{\e}\in I_{\e} 
	&&\mbox{ if } \e\in \E(\Gamma)\\
	\widetilde{u}_e(x_{\e})&= u_e(\ell_{\e}-x_{\e}) 
	&\forall x_{\e}\in I_{\e}&& \mbox{ otherwise }.
\end{aligned}
\]
Moreover
\[
	\int_{\widetilde{\Gamma}} |\widetilde{u}^{\prime}(x)|^p\, dx
	= 2 \int_{\Gamma} |u^{\prime}(x)|^p\, dx,\quad \mbox{ and }\quad
	\int_{\widetilde{\Gamma}} |\widetilde{u}(x)|^p\, dx
	= 2 \int_{\Gamma} |u(x)|^p\, dx.
\]
Then 
\begin{equation}
	\label{eq:des1}
	\lambda_{1,p}(\widetilde{\Gamma},\V_D)\le \lambda_{1,p}(\Gamma,\V_D).
\end{equation}

On the other hand, 	there exists a closed path on $\widetilde\Gamma$
coming along every edge in $\widetilde\Gamma$ precisely one time, due to 
$d_{\v}(\widetilde\Gamma)$ is even for all $\v\in\V(\widetilde\Gamma),$
see \cite{Euler,HC}. We identify this path with a loop $\mathfrak{L}$ 
on a vertex $\v_0\in\V_D$ of length less than or equal to $2\ell(\Gamma).$
Observe that $\mathfrak{L}$ is a metric graph,
\begin{equation}
	\label{eq:des2}
	\ell(\mathfrak{L})\le2\ell(\Gamma)\quad\mbox{ and }\quad 
	\lambda_{1,p}(\mathfrak{L},,\{\v_0\})
	\le \lambda_{1,p}(\widetilde\Gamma,\V_D).
\end{equation}
Moreover, 
\begin{align*}
	\lambda_{1,p}(\mathfrak{L},\{\v_0\})&=\inf\left\{
		\dfrac{\displaystyle\int_{\mathfrak{L}} |u^\prime|^p \, dx}
		{\displaystyle\int_{\mathfrak{L}} |u|^p \, dx}\colon
		u\in\X(\mathfrak{L},\{\v_0\}), u\neq0
		\right\}\\	
		&=\inf\left\{
		\dfrac{\displaystyle\int_0^{\ell(\mathfrak{L})} |u^\prime|^p \, dx}
		{\displaystyle\int_0^{\ell(\mathfrak{L})} |u|^p \, dx}\colon
		u\in W^{1,p}_0(0,\ell(\mathfrak{L})), u\neq0
		\right\}\\
		&=\left(\dfrac{\pi_{p}}{\ell(\mathfrak{L})}\right)^p
		\dfrac{p}{p^\prime}	\quad (\mbox{by \eqref{eq:primerautov}}).
\end{align*}
Therefore, by \eqref{eq:des1} and \eqref{eq:des2},
\[
	\lambda_{1,p}(\Gamma,\V_D)\ge\lambda_{1,p}(\widetilde\Gamma,\V_D)
	\ge\lambda_{1,p}(\mathfrak{L},\{v_0\})=
	\left(\dfrac{\pi_{p}}{\ell(\mathfrak{L})}\right)^p
		\dfrac{p}{p^\prime}\ge 
		\left(\dfrac{\pi_{p}}{2\ell(\Gamma)}\right)^p
		\dfrac{p}{p^\prime},
\]
which is the desired conclusion.\qed
\end{proof}

The lower bound given in the above theorem is optimal as the 
following example shows.

\begin{ex}\label{ex:cotainf}
	Let $\Gamma$ be a simple graph with 2 vertices
	 and an edge, that is $\V(\Gamma)=\{\v_1,\v_2\}$ and 
	 $\E(\Gamma)=\{[\v_1,\v_2]\}.$ Let
	 $\V_D=\{\v_1\}.$
	
	\begin{center}
	\begin{tikzpicture}
			\node (a0) at (1.5,.5) {$\Gamma$};
  		    \node  [fill, circle,draw, scale=.5]  (a) at (0,0) {};
  			\node  (a1) at (1.5,-0.3)  { $\ell(\Gamma)$ };
  			\node  (c) at (1.5,0)  {};
  			\node  [ circle,draw, scale=.5]  (b) at (3,0)  {};
  			\node  (a2) at (-0.4,0)  { $\v_1$ };
  			\node  (b1) at (3.4,0)  { $\v_2$ };
  			\draw[directed,ultra thick] (a) -- (b);	
		\end{tikzpicture}
	\end{center}
	Then
	\begin{align*}
	\lambda_{1,p}(\Gamma,\V_D)&=\inf\left\{
		\dfrac{\displaystyle\int_{\Gamma} |u^\prime|^p \, dx}
		{\displaystyle\int_{\Gamma} |u|^p \, dx}\colon
		u\in\X(\Gamma,\V_D), u\neq0
		\right\}\\	
		&=\inf\left\{
		\dfrac{\displaystyle\int_0^{\ell(\Gamma)} |u^\prime|^p \, dx}
		{\displaystyle\int_0^{\ell(\Gamma)} |u|^p \, dx}\colon
		u\in W^{1,p}(0,\ell(\Gamma)), u(0)=0, u\neq0
		\right\}\\
		&=\inf\left\{
		\dfrac{\displaystyle\int_0^{2\ell(\Gamma)} |u^\prime|^p \, dx}
		{\displaystyle\int_0^{2\ell(\Gamma)} |u|^p \, dx}\colon
		u\in W^{1,p}_0(0,2\ell(\Gamma)),  u\neq0
		\right\}\\
		&=\left(\dfrac{\pi_{p}}{2\ell(\Gamma)}\right)^p
		\dfrac{p}{p^\prime}.
	\end{align*}
\end{ex}

\begin{ex}\label{ex:cotasup*.2}
	Let $\Gamma$ be a star graph with $n+1$ vertices
	 and $n$ edges, that is $\V(\Gamma)=\{\v_0,\v_1,\dots,\v_n\}$ and 
	 $\E(\Gamma)=\{[\v_1,\v_0], [\v_1,\v_2], \dots, [\v_1,\v_n]\}.$ Let
	 $\V_D=\{\v_1\},$ $\varepsilon>0$ 
	 and $\ell([\v_1,\v_0])=L-(m-1)\varepsilon$
	 and $\ell([\v_1,\v_i])=\varepsilon$
	 for all $i\in\{2,\dots, n\}.$
	 Then $\ell(\Gamma)=L$.
	 
	 \medskip
	 
	\begin{center}
	\begin{tikzpicture}[scale=1.5]
			\node 	(a0) at (1.3,.5) {$\Gamma$};
  		    \node  [fill,circle,draw, scale=.5]  (a) at (0,0) {};
 
  			\node  (c) at (1.5,0)  {};
  			\node  [ circle,draw, scale=.5]  (b) at (-1.7,0)  {};
  			\node  [ circle,draw, scale=.5]  (c) at 
  			(0.35355339059,0.35355339059)  {};
  			\node  [  circle,draw, scale=.5]  (d) at 
  			(0,.5)  {};
  			\node  [ circle,draw, scale=.5]  (e) at 
  			(-0.35355339059,0.35355339059)  {};
  			\node  [  circle,draw, scale=.5]  (f) at 
  			(0.35355339059,-0.35355339059)  {};
  			\node  [  circle,draw, scale=.5]  (g) at 
  			(-0.35355339059,-0.35355339059)  {};
  			\node  [  circle,draw, scale=.5]  (h) at 
  			(0,-.5)  {};
  			\node  (b0) at (0.3,0)  { $\v_1$ };
  			\node  (b1) at (-2,0)  { $\v_0$ };
  			\node  (b2) at (0,.7)  { $\v_3$ };
  			\node  (b3) at (0,-.7)  { $\v_6$ };
  			\node  (c1) at (0.50710678118,0.50710678118)  { $\v_2$ };
  			\node  (c2) at (-0.50710678118,0.50710678118)  { $\v_4$ };
  			\node  (c3) at (-0.50710678118,-0.50710678118)  { $\v_5$ };
  			\node  (c3) at (0.50710678118,-0.50710678118)  { $\v_7$ };
  			\draw[directed,ultra thick] (a) -- (b);	
  			\draw[directed,ultra thick] (a) -- (c);	
  			\draw[directed,ultra thick] (a) -- (d);	
  			\draw[directed,ultra thick] (a) -- (e);	
  			\draw[directed,ultra thick] (a) -- (f);	
  			\draw[directed,ultra thick] (a) -- (g);	
  			\draw[directed,ultra thick] (a) -- (h);	
		\end{tikzpicture}
	\end{center}
	Then
	\[
		\lambda_{1,p}^{\varepsilon}(\Gamma,V_D)=
		\left(\dfrac{\pi_{p}}{2(L-(n-1)\varepsilon)}\right)^p
		\dfrac{p}{p^\prime}\to\left(\dfrac{\pi_{p}}{2L}\right)^p
		\dfrac{p}{p^\prime}=\left(\dfrac{\pi_{p}}{2L}\right)^p
		\dfrac{p}{p^\prime}
	\]
	as $\varepsilon\to0^+.$
	Hence, given $L>0$ we have that
	\[
		\inf\left\{\lambda_{1,p}(\Gamma,\V_D)\colon \Gamma 
		\text{ is a star graph}, \ell(\Gamma)=L,
		\emptyset\neq\V_D\subset\V(\Gamma)
		\right\}
	\]
	is equal to $\left(\dfrac{\pi_{p}}{2L}\right)^p
		\dfrac{p}{p^\prime}.$
\end{ex}

\medskip

Finally, we give an upper bound for the first eigenvalue of the $p-$Laplacian. 

\begin{teo}
	\label{teo:cotasup} Let $\Gamma$ be a connected compact metric graph,
	$\V_D$ be a non-empty subset of $\V(\Gamma)$ and
	$p\in(1,+\infty).$
	Then
	\[
		\lambda_{1,p}(\Gamma,\V_D)\le 
		\left(\dfrac{\mathrm{card}(E(\Gamma))\pi_p}{\ell(\Gamma)}\right)^p
		\dfrac{p}{p^\prime},
	\]
	where $\mathrm{card}(E(\Gamma))$ is the number of elements in $\E(\Gamma).$
\end{teo}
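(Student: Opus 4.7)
The plan is to build an explicit test function supported on a single long edge and plug it into the variational characterization \eqref{eq:autovalor}. Since the constant $\left(\pi_{p}/L\right)^{p}\,p/p'$ on the right is exactly the first Dirichlet eigenvalue \eqref{eq:primerautov} on an interval of length $L$, and since $\mathrm{card}(\E(\Gamma))/\ell(\Gamma)$ is the reciprocal of the \emph{average} edge length, it is natural to locate the longest edge and use an interval-type eigenfunction on it.

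Concretely, set $N\coloneqq\mathrm{card}(\E(\Gamma))$ and, by the pigeonhole principle, pick an edge $\e^{*}\in\E(\Gamma)$ with
\[
\ell_{\e^{*}}\ \ge\ \frac{\ell(\Gamma)}{N}.
\]
Define $u\colon\Gamma\to\mathbb{R}$ by
\[
u_{\e}(x_{\e})=
\begin{cases}
\sin_{p}\!\left(\dfrac{\pi_{p}}{\ell_{\e^{*}}}\,x_{\e^{*}}\right) & \text{if }\e=\e^{*},\\[6pt]
0 & \text{if }\e\neq\e^{*}.
\end{cases}
\]
Because $\sin_{p}(0)=\sin_{p}(\pi_{p})=0$, the function $u$ vanishes at every vertex of $\Gamma$ (in particular on $\V_{D}$) and agrees across all incident edges, so it is continuous on $\Gamma$ and lies in $W^{1,p}(\Gamma)$. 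Hence $u\in\X(\Gamma,\V_{D})$ and $u\not\equiv 0$, so $u$ is admissible for the Rayleigh quotient \eqref{eq:autovalor}. Note that Kirchhoff conditions are \emph{not} required of test functions; they are only an Euler--Lagrange consequence for minimizers.

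With this choice, the integrals over $\Gamma$ reduce to integrals over $I_{\e^{*}}$, and the Rayleigh quotient of $u$ is exactly the first Dirichlet eigenvalue on an interval of length $\ell_{\e^{*}}$, namely
\[
\frac{\int_{\Gamma}|u'|^{p}\,dx}{\int_{\Gamma}|u|^{p}\,dx}
\ =\ \left(\frac{\pi_{p}}{\ell_{\e^{*}}}\right)^{\!p}\frac{p}{p'},
\]
by \eqref{eq:primerautov}. Combining this with $\ell_{\e^{*}}\ge\ell(\Gamma)/N$ and the variational definition \eqref{eq:autovalor} yields the desired inequality.

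The proof is essentially a one-line test-function argument, so I do not foresee any real obstacle. The only conceptual point worth emphasizing is that one must not impose Kirchhoff transmission on the test function; this is what makes it legitimate to place a $\sin_{p}$-bump on one edge and extend by zero, despite the derivative jump at the endpoints of $\e^{*}$.
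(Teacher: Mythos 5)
Your proposal is correct and is essentially identical to the paper's proof: both locate the longest edge (which has length at least $\ell(\Gamma)/\mathrm{card}(\E(\Gamma))$ by pigeonhole), place a $\sin_p$ Dirichlet eigenfunction on it extended by zero, and evaluate the Rayleigh quotient. Your version is in fact slightly cleaner, since the paper's displayed inequality \eqref{eq:cs1} contains a typo (the fraction is written upside down) that you state correctly.
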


\begin{proof} 
	Let $e_0\in \E(\Gamma)$ such that
	$\ell_{e_0}=\max\{\ell_{e}\colon e\in\E(\Gamma)\}.$ Then
	\begin{equation}\label{eq:cs1}
		\ell_{e_0}\ge\dfrac{\mathrm{card}(E(\Gamma))}{\ell(\Gamma)}.
	\end{equation}	
	
	On the other hand, taking
	\[
		u(x)=
		\begin{cases}
			\dfrac{\ell_{e_0}}{\pi_p}\sin_p
			\left(\dfrac{\pi_p}{\ell_{e_0}}
			x\right) 
			&\text{ if } x\in I_{e_0}\\
			0 &\text{otherwise},
		\end{cases}
	\]
	and using \eqref{eq:cs1}, we have that
	\[
		\lambda_{1,p}(\Gamma,V_D)\le
		\dfrac{\displaystyle\int_{e_0}|u^\prime(x)|\, dx}
		{\displaystyle\int_{e_0}|u^\prime(x)|\, dx}=
		\left(\dfrac{\pi_{p}}{\ell_{e_0}}\right)^p
		\dfrac{p}{p^\prime}\le 
		\left(\dfrac{\mathrm{card}(E(\Gamma))\pi_p}{\ell(\Gamma)}\right)^p
		\dfrac{p}{p^\prime} .
	\]
	This completes the proof.\qed
\end{proof}

 The upper bound is also optimal.
\begin{ex}
	Let $\Gamma$ as in Eample \ref{ex:cotainf}
	 and  $\V_D=\{\v_1,\v_2\}.$
	
	\begin{center}
	\begin{tikzpicture}
			 \node (a0) at (1.5,.5) {$\Gamma$};
  		    \node  [fill, circle,draw, scale=.5]  (a) at (0,0) {};
  			\node  (a1) at (1.5,-0.3)  { $\ell(\Gamma)$ };
  			\node  (c) at (1.5,0)  {};
  			\node  [fill, circle,draw, scale=.5]  (b) at (3,0)  {};
  			\node  (a2) at (-0.4,0)  { $\v_1$ };
  			\node  (b1) at (3.4,0)  { $\v_2$ };
  			\draw[directed,ultra thick] (a) -- (b);	
		\end{tikzpicture}
	\end{center}
	Then
	\[
		\mathrm{card}(E(\Gamma))=1 \quad\mbox{ and }\quad
		\lambda_{1,p}(\Gamma,V_D)=
		\left(\dfrac{\pi_{p}}{\ell(\Gamma)}\right)^p
		\dfrac{p}{p^\prime}.
	\]
\end{ex}

\begin{ex}\label{ex:cotasup*}
	Let $\Gamma$ be a star graph with $n+1$ vertices
	 and $n$ edges, that is $\V(\Gamma)=\{\v_0,\v_1,\dots,\v_n\}$ and 
	 $\E(\Gamma)=\{[\v_1,\v_0], [\v_2,\v_0], \dots, [\v_n,\v_0]\}.$ Let
	 $\V_D=\V(\Gamma)$ and $\ell([\v_i,\v_0])=\ell$ for all 
	 $i\in\{1,\dots, n\}.$
	 Then $\ell(\Gamma)=n\ell=\mathrm{card}(\E(\Gamma))\ell.$
	 
	 \medskip
	 
	\begin{center}
	\begin{tikzpicture}[scale=1.5]
			\node 	(a0) at (-1.5,.5) {$\Gamma$};
  		    \node  [fill, circle,draw, scale=.5]  (a) at (0,0) {};
  			\node  (a1) at (.6,0.2)  { $L$ };
  			\node  (c) at (1.5,0)  {};
  			\node  [ fill, circle,draw, scale=.5]  (b) at (1,0)  {};
  			\node  [ fill, circle,draw, scale=.5]  (c) at 
  			(0.70710678118,0.70710678118)  {};
  			\node  [ fill, circle,draw, scale=.5]  (d) at 
  			(0,1)  {};
  			\node  [ fill, circle,draw, scale=.5]  (e) at 
  			(-0.70710678118,0.70710678118)  {};
  			\node  [ fill, circle,draw, scale=.5]  (f) at 
  			(0.70710678118,-0.70710678118)  {};
  			\node  [ fill, circle,draw, scale=.5]  (g) at 
  			(-0.70710678118,-0.70710678118)  {};
  			\node  [ fill, circle,draw, scale=.5]  (h) at 
  			(0,-1)  {};
  			\node  (b0) at (-0.3,0)  { $\v_0$ };
  			\node  (b1) at (1.3,0)  { $\v_1$ };
  			\node  (b2) at (0,1.3)  { $\v_3$ };
  			\node  (b3) at (0,-1.3)  { $\v_6$ };
  			\node  (c1) at (0.90710678118,0.90710678118)  { $\v_2$ };
  			\node  (c2) at (-0.90710678118,0.90710678118)  { $\v_4$ };
  			\node  (c3) at (-0.90710678118,-0.90710678118)  { $\v_5$ };
  			\node  (c3) at (0.90710678118,-0.90710678118)  { $\v_7$ };
  			\draw[directed,ultra thick] (b) -- (a);	
  			\draw[directed,ultra thick] (c) -- (a);	
  			\draw[directed,ultra thick] (d) -- (a);	
  			\draw[directed,ultra thick] (e) -- (a);	
  			\draw[directed,ultra thick] (f) -- (a);	
  			\draw[directed,ultra thick] (g) -- (a);	
  			\draw[directed,ultra thick] (h) -- (a);	
		\end{tikzpicture}
	\end{center}
	Then
	\[
		\lambda_{1,p}(\Gamma,V_D)=
		\left(\dfrac{\pi_{p}}{\ell}\right)^p
		\dfrac{p}{p^\prime}=\left(
		\dfrac{\mathrm{card}(\E(\Gamma))\pi_{p}}{\ell(\Gamma)}\right)^p
		\dfrac{p}{p^\prime}.
	\]
	Hence, given $L>0$ and $n\in\mathbb{N}$ we have that
	\[
		\max\left\{\lambda_{1,p}(\Gamma,\V_D)\colon \Gamma 
		\text{ is a star graph}, \ell(\Gamma)=L, 
		\mathrm{card}(\E(\Gamma))=n,
		\emptyset\neq\V_D
		\right\}
	\]
	is equal to $\left(\dfrac{n\pi_{p}}{L}\right)^p
		\dfrac{p}{p^\prime}.$
\end{ex}

\section{The shape derivative of $\lambda_{1,p}(\Gamma,\V_D)$.} \label{sect-shape}

The aim of this section is to study the perturbation properties
of $\lambda_{1,p}(\Gamma,\V_D)$ with respect to the edges. 

More precisely, let 
$e_0\in\E(\Gamma)$ such that $e_0=[\u,\v],$
we consider the following family of graphs 
$\{\Gamma_\delta\}_{\delta\in\mathbb{R}}$ where for any $\delta$
\[
	\V(\Gamma_\delta)=\V(\Gamma),	
	\quad \E(\Gamma_{\delta})= \E(\Gamma_{\delta})
\]
and the length assigned to $e\in\E(\Gamma_{\delta})$ is
\[
	\ell^\delta_e= \begin{cases}
		\ell_{e_0}+\delta, &\text{ if } e=e_0,\\
		\ell_e, &\text{ otherwise. }\\
\end{cases}
\] 
The problem of perturbation of eigenvalues consists in analyzing the 
dependence of $\lambda(\delta)\coloneqq\lambda_{1,p}(\Gamma_{\delta},\V_D)$
with respect to $\delta.$ Note that $\lambda(0)=\lambda_{1,p}(\Gamma,\V_D).$

\begin{lem}\label{lem:continuidad}
	Let $\Gamma$ be a connected compact metric graph, $\V_D$ 
	be a non-empty subset of $\V(\Gamma)$ and $p\in(1,+\infty).$ 
	Then function $\lambda(\delta)$ is
	continuous at $\delta=0.$ 
\end{lem}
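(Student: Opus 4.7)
The plan is to compare the Rayleigh quotients on $\Gamma$ and $\Gamma_\delta$ via an explicit rescaling bijection between the admissible classes $\X(\Gamma,\V_D)$ and $\X(\Gamma_\delta,\V_D)$. Since $\Gamma$ and $\Gamma_\delta$ differ only in the length of the single edge $e_0$, the strategy is to leave functions untouched on all other edges and only rescale along $e_0$, reducing the question to a one-dimensional change-of-variables computation.

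Concretely, for small $|\delta|$ set $\alpha_\delta:=\ell_{e_0}/(\ell_{e_0}+\delta)$ and define $\Phi_\delta\colon \X(\Gamma,\V_D)\to\X(\Gamma_\delta,\V_D)$ by $(\Phi_\delta u)_e=u_e$ on every $e\neq e_0$ and $(\Phi_\delta u)_{e_0}(x)=u_{e_0}(\alpha_\delta x)$ for $x\in[0,\ell_{e_0}+\delta]$. Since $\Phi_\delta$ preserves the vertex values of $u$, the continuity at nodes, the Dirichlet condition on $\V_D$, and the $W^{1,p}$ regularity are automatically preserved. A routine change of variable yields
\begin{equation*}
\int_{\Gamma_\delta}|(\Phi_\delta u)'|^p\,dx=\int_{\Gamma}|u'|^p\,dx+(\alpha_\delta^{p-1}-1)\int_{e_0}|u'|^p\,dx,
\end{equation*}
\begin{equation*}
\int_{\Gamma_\delta}|\Phi_\delta u|^p\,dx=\int_{\Gamma}|u|^p\,dx+(\alpha_\delta^{-1}-1)\int_{e_0}|u|^p\,dx,
\end{equation*}
and the inverse map $\Psi_\delta:=\Phi_\delta^{-1}\colon\X(\Gamma_\delta,\V_D)\to\X(\Gamma,\V_D)$ satisfies analogous identities with $\alpha_\delta$ replaced by $\alpha_\delta^{-1}$.

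For the upper estimate I would pick a minimizer $u_0\in\X(\Gamma,\V_D)$ of $\lambda(0)$ normalized by $\int_\Gamma|u_0|^p\,dx=1$ and use $\Phi_\delta u_0$ as a test function in the variational characterization \eqref{eq:autovalor} for $\lambda(\delta)$. Combining with the identities above yields
\[
\lambda(\delta)\le \frac{\lambda(0)+(\alpha_\delta^{p-1}-1)\int_{e_0}|u_0'|^p\,dx}{1+(\alpha_\delta^{-1}-1)\int_{e_0}|u_0|^p\,dx}\longrightarrow\lambda(0),\quad \delta\to 0.
\]
For the lower estimate, I would let $u_\delta\in\X(\Gamma_\delta,\V_D)$ be a minimizer for $\lambda(\delta)$ normalized by $\int_{\Gamma_\delta}|u_\delta|^p\,dx=1$ and insert $\Psi_\delta u_\delta$ into the Rayleigh quotient on $\Gamma$. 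The analogous computation produces
\[
\lambda(0)\le \frac{\lambda(\delta)+(\alpha_\delta^{-(p-1)}-1)\int_{e_0^\delta}|u_\delta'|^p\,dx}{1+(\alpha_\delta-1)\int_{e_0^\delta}|u_\delta|^p\,dx}.
\]

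The only delicate point, and the part I would be most careful about, is controlling the numerator of the lower-bound quotient: the correction term $(\alpha_\delta^{-(p-1)}-1)\int_{e_0^\delta}|u_\delta'|^p\,dx$ must stay bounded as $\delta\to 0$. This is immediate, since the upper estimate just established gives $\int_{e_0^\delta}|u_\delta'|^p\,dx\le\int_{\Gamma_\delta}|u_\delta'|^p\,dx=\lambda(\delta)\le\lambda(0)+1$ for $|\delta|$ small, and combined with $\int_{e_0^\delta}|u_\delta|^p\,dx\le 1$ this yields $\lambda(0)\le\lambda(\delta)(1+o(1))+o(1)$, whence $\lambda(\delta)\to\lambda(0)$ as claimed.
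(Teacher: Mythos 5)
Your proposal is correct, and its skeleton coincides with the paper's: the same edge-rescaling map between $\X(\Gamma,\V_D)$ and $\X(\Gamma_\delta,\V_D)$, the same change-of-variables identities, the eigenfunction of $\lambda(0)$ pushed forward to get $\limsup_{\delta\to 0}\lambda(\delta)\le\lambda(0)$, and the eigenfunction of $\lambda(\delta)$ pulled back to get the reverse inequality. The one genuine divergence is in the lower bound: the paper takes a sequence $\delta_j\to 0$ realizing the $\liminf$, extracts a weakly convergent subsequence of the rescaled eigenfunctions $v_{\delta_j}$ in $W^{1,p}(\Gamma)$, and concludes via weak lower semicontinuity, whereas you dispense with compactness altogether by bounding the correction terms directly, using $\int_0^{\ell_{e_0}+\delta}|u_\delta'|^p\,dx\le\lambda(\delta)\le\lambda(0)+1$ for small $|\delta|$ (which is legitimate, since it only uses the upper estimate already established) together with $\int_0^{\ell_{e_0}+\delta}|u_\delta|^p\,dx\le 1$. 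Your route is more elementary and arguably cleaner for the continuity statement alone; the paper's compactness detour is not wasted, though, because it is recycled immediately in Corollary \ref{co:continuidad} to upgrade to strong $W^{1,p}$ convergence of the rescaled eigenfunctions, which is the ingredient the one-sided differentiability formulas of Lemma \ref{lem:derivabilidad} actually require.
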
 
	
\begin{proof}
	Let $u$ be an eigenfunction associated to $\lambda(0)$  with
	$\|u\|_{L^p(\Gamma)}=1.$ Then
	\[
		w_\delta(x)=
		\begin{cases} 
			u\left(\dfrac{\ell_{e_0}}{\ell_{e_0}+\delta}x\right)
			&\text{ if } x\in I_{e_0},\\
			u(x) &\text{ otherwise},
		\end{cases}
	\]
	belongs to $\X(\Gamma_\delta,\V_D)$ for all $\delta.$ Therefore
	for any $\delta$
	\begin{align*}
		\lambda(\delta)&\le\dfrac{\displaystyle\int_{\Gamma_\delta} 
		|w_\delta^\prime(x)|^p\, dx}{\displaystyle\int_{\Gamma_\delta} 
		|w_\delta(x)|^p\, dx}\\
		&=\dfrac{\displaystyle\sum_{e\in\E(\Gamma)\setminus\{e_0\}}
		\int_{0}^{\ell_{e}} |u^\prime(x)|^p\, dx + 
		\int_{0}^{\ell_{e_0}+\delta} 
		\left|u^\prime\left(\dfrac{\ell_{e_0}}{\ell_{e_0}+\delta}x\right)
		\right|^p\left(\dfrac{\ell_{e_0}}{\ell_{e_0}+\delta}\right)^{p}
		\, dx}
		{\displaystyle\sum_{e\in\E(\Gamma)\setminus\{e_0\}}
		\int_{0}^{\ell_{e}} |u(x)|^p\, dx + 
		\int_{0}^{\ell_{e_0}+\delta} 
		\left|u\left(\dfrac{\ell_{e_0}}{\ell_{e_0}+\delta}x\right)
		\right|^p\, dx}\\
		&=\dfrac{\displaystyle\sum_{e\in\E(\Gamma)\setminus\{e_0\}}
		\int_{0}^{\ell_{e}} |u^\prime(x)|^p\, dx + 
		\int_{0}^{\ell_{e_0}} 
		\left|u^\prime(x)\, \right|^p
		dx
		\left(\dfrac{\ell_{e_0}}{\ell_{e_0}+\delta}\right)^{p-1}
		}
		{\displaystyle\sum_{e\in\E(\Gamma)\setminus\{e_0\}}
		\int_{0}^{\ell_{e}} |u(x)|^p\, dx + 
		\int_{0}^{\ell_{e_0}} 
		\left|u(x)
		\right|^p\, dx\dfrac{\ell_{e_0}+\delta}{\ell_{e_0}}}.
	\end{align*} 
	Since $u$ is an eigenfunction associated to $\lambda(0)$ and 
	$\|u\|_{L^p(\Gamma)} =1,$ we have that
	\begin{equation}\label{eq:cont1}
		\lambda(\delta)\le
		\dfrac{\lambda(0)+\displaystyle
		\left[\left(\dfrac{\ell_{e_0}}{\ell_{e_0}+\delta}\right)^{p-1}
		-1\right]\int_{0}^{\ell_{e_0}} 
		\left|u^\prime(x)\, \right|^p
		dx}{1+\displaystyle\dfrac{\delta}{\ell_{e_0}}
		\int_{0}^{\ell_{e_0}} \left|u(x)\right|^p\, dx}
		\qquad\forall\delta.
	\end{equation}
	Therefore
	\begin{equation}\label{eq:cont2}
		\limsup_{\delta\to0}\lambda(\delta)\le\lambda(0).
	\end{equation}
	
	Then to show that $\lambda(\delta)$ is continuous at $\lambda=0,$ 
	it remains to prove that
	\begin{equation}\label{eq:cont3}
		\liminf_{\delta\to0}\lambda(\delta)\ge\lambda(0).
	\end{equation}
	
	Let $u_{\delta}$ be an eigenfunction 
	associated to $\lambda(\delta)$ normalized by 
	$\|u_\delta\|_{L^p(\Gamma_\delta)}=1.$ 
	Then, for any $\delta$
	\[
		v_{\delta}(x)=
		\begin{cases} 
			u_{\delta}\left(\dfrac{\ell_{e_0}+\delta}{\ell_{e_0}}x\right)
			&\text{ if } x\in I_{e_0},\\
			u_{\delta}(x) &\text{ otherwise, }
		\end{cases}
	\]
	belongs to $\X(\Gamma,\V_D).$ Moreover 
	\begin{equation}\label{eq:cont4}
		\begin{array}{l}
		\displaystyle
			\|v_\delta\|_{L^p(\Gamma)}^p=
			\int_{\Gamma}|v_\delta(x)|^p\, dx\\
			\displaystyle
			=\sum_{e\in\E(\Gamma)\setminus\{e_0\}}
			\int_{0}^{\ell_{e}}|u_\delta(x)|^p\, dx
			+\int_{0}^{\ell_{e_0}}\left
			|u_\delta\left(\dfrac{\ell_{e_0}+\delta}{\ell_{e_0}}
			x\right)\right|^p\, dx\\
			\displaystyle =\sum_{e\in\E(\Gamma)\setminus\{e_0\}}
			\int_{0}^{\ell_{e}}|u_\delta(x)|^p\, dx
			+\left(\dfrac{\ell_{e_0}}{\ell_{e_0}+\delta}\right)
			\int_{0}^{\ell_{e_0}+\delta}\left
			|u_\delta(x)\right|^p
			dx\\
			\displaystyle =  1-\dfrac{\delta}{\ell_{e_0}+\delta}
			\int_{0}^{\ell_{e_0}+\delta}\left
			|u_\delta(x)\right|^p
			\, dx
		\end{array}
	\end{equation}
	for all $\delta,$ and
	$$
	\begin{array}{l}
	\displaystyle
		\|v_\delta^\prime\|_{L^p(\Gamma)}^p=
			\int_{\Gamma}|v_\delta^\prime(x)|^p\, dx\\
			\displaystyle
			=\sum_{e\in\E(\Gamma)\setminus\{e_0\}}
			\int_{0}^{\ell_{e}}|u_\delta^\prime(x)|^p\, dx
			+\int_{0}^{\ell_{e_0}}\left
			|u_\delta^\prime\left(\dfrac{\ell_{e_0}+\delta}{\ell_{e_0}}
			x\right)\right|^p
			\left(\dfrac{\ell_{e_0}+\delta}{\ell_{e_0}}\right)^p dx\\
			\displaystyle =\sum_{e\in\E(\Gamma)\setminus\{e_0\}}
			\int_{0}^{\ell_{e}}|u_\delta^\prime(x)|^p\, dx
			+\left(1+\dfrac{\delta}{\ell_{e_0}}\right)^{p-1}
			\int_{0}^{\ell_{e_0}+\delta}\left
			|u_\delta^\prime(x)\right|^p
			\, dx.
	\end{array}
	$$
	Hence
	\begin{equation}\label{eq:cont5}
		\|v_\delta^\prime\|_{L^p(\Gamma)}^p=\lambda(\delta)+
		\left[\left(1+\dfrac{\delta}{\ell_{e_0}}\right)^{p-1}
		-1\right]\int_{0}^{\ell_{e_0}+\delta_j}\left
			|u_\delta^\prime(x)\right|^p
			\, dx.\\
	\end{equation}
Then
	\begin{equation}\label{eq:cont6}
		\lambda(0)\le\dfrac{\displaystyle\lambda(\delta)+
		\left[\left(1+\dfrac{\delta}{\ell_{e_0}}\right)^{p-1}
		-1\right]\int_{0}^{\ell_{e_0}+\delta_j}\left
			|u_\delta^\prime(x)\right|^p
			\, dx}{\displaystyle1-\dfrac{\delta}{\ell_{e_0}+\delta}
			\int_{0}^{\ell_{e_0}+\delta}\left
			|u_\delta(x)\right|^p
			\, dx}
	\end{equation}
	for all $\delta.$
	
	Let $\{\delta_{j}\}_{j\in\mathbb{N}}$ such that $\delta_j\to0$ as
	$j\to \infty$ and
	\begin{equation}\label{eq:cont30}
		\lim_{j\to +\infty}\lambda(\delta_j)=
		\liminf_{\delta\to0}\lambda(\delta).
	\end{equation}
	Then, by \eqref{eq:cont1}, \eqref{eq:cont30},
	\eqref{eq:cont4}, and \eqref{eq:cont5},
	$\{v_{\delta_j}\}_{j\in\mathbb{N}}$ 
	is bounded in $W^{1,p}(\Gamma).$ Hence there exist
	a subsequence (still denote $\{v_{\delta_j}\}_{j\in\mathbb{N}}$) 
	and $u_0\in\X(\Gamma,\V_D)$ such that
	\begin{align*}
		v_{\delta_j}
		\rightharpoonup u_0 &\quad\text{ weakly in } W^{1,p}(\Gamma),\\
		v_{\delta_j}\to u_0 &\quad\text{ strongly in } L^{p}(\Gamma).  
	\end{align*}
	Then, by \eqref{eq:cont4}, we have $\|u_0\|_{L^p(\Gamma)}=1.$ 
	In addition, by \eqref{eq:cont5} and \eqref{eq:cont30}, we get 
	\begin{align*}
		\lambda(0)&\le \int_\Gamma |u_0^{\prime}(x)|^p
		\, dx\\
		&\le
		\liminf_{j\to+\infty}
		\int_\Gamma |v_{\delta_j}^{\prime}(x)|^p
		\, dx\\
		&\le
		\liminf_{j\to+\infty}
		\lambda(\delta_j)+
		\left[\left(1+\dfrac{\delta_j}{\ell_{e_0}}\right)^{p-1}
		-1\right]\int_{0}^{\ell_{e_0}+\delta_j}\left
			|u_{\delta_j}^\prime(x)\right|^p
			\, dx\\
		&=\liminf_{\delta\to 0}\lambda(\delta).
	\end{align*}
	Therefore \eqref{eq:cont3} holds. 
	
	Thus, by \eqref{eq:cont2} and \eqref{eq:cont3}, 
	the function $\lambda(\delta)$ is
	continuous at $\delta=0.$\qed	
\end{proof} 

\begin{co}\label{co:continuidad}
	Let $\Gamma$ be a connected compact metric graph, $\V_D$ 
	be a non-empty subset of $\V(\Gamma),$ $p\in(1,+\infty)$ and
	$u_\delta$ be an eigenfunction 
	associated to $\lambda(\delta)$ normalized by 
	$\|u_\delta\|_{L^p(\Gamma_\delta)}=1.$ Then there exists a
	subsequence $\delta_j\to 0$ and an eigenfunction $u_0$ associated
	to $\lambda(0)$ such that
	\[
		v_{\delta_j}\to u_0 \mbox{ strongly in } \X(\Gamma,\V_D)
	\]
	as $j\to +\infty$ where 
	\[
		v_{\delta_j}(x)=
		\begin{cases} 
			u_{\delta_j}\left(\dfrac{\ell_{e_0}+\delta}{\ell_{e_0}}x\right)
			&\text{ if } x\in I_{e_0}\\
			u_{\delta_j}(x) &\text{ otherwise. }
		\end{cases}
	\]
	Moreover $\|u_0\|_{L^p(\Gamma)}=1$ and
	\begin{align*}
	 	 \lim_{j\to\infty}
		 \int_{0}^{\ell_{e_0}+\delta_j}\left
			|u_{\delta_j}(x)\right|^p
			\, dx&=\int_{0}^{\ell_{e_0}} |u_0 (x) |^p\, dx,\\
		\lim_{j\to\infty}
		\int_{0}^{\ell_{e_0}+\delta_j}\left
			|u_j^\prime(x)\right|^p
			\, dx&=\int_{0}^{\ell_{e_0}} |u_0^\prime (x)|^p\, dx.
	\end{align*}
\end{co}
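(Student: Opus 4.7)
The plan is to reuse the compactness argument already assembled in the proof of Lemma \ref{lem:continuidad}. That argument extracts from $\{v_{\delta}\}$ a subsequence $v_{\delta_j}$ with $v_{\delta_j}\rightharpoonup u_0$ weakly in $W^{1,p}(\Gamma)$ and $v_{\delta_j}\to u_0$ strongly in $L^p(\Gamma)$, where $u_0\in\mathcal{X}(\Gamma,\V_D)$ and, via identity \eqref{eq:cont4}, $\|u_0\|_{L^p(\Gamma)}=1$. The final inequality chain in the proof of Lemma \ref{lem:continuidad} also forces $\int_\Gamma |u_0'|^p\,dx=\lambda(0)$, so $u_0$ is a minimizer of the Rayleigh quotient and hence an eigenfunction associated to $\lambda(0)$. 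This takes care of the qualitative assertions of the corollary.

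The step I expect to be the main obstacle is upgrading the weak convergence $v_{\delta_j}\rightharpoonup u_0$ in $W^{1,p}(\Gamma)$ to strong convergence in $\mathcal{X}(\Gamma,\V_D)$. The plan is to show convergence of the Dirichlet norms. From identity \eqref{eq:cont5},
\[
\|v_{\delta_j}'\|_{L^p(\Gamma)}^p = \lambda(\delta_j) + \left[\left(1+\tfrac{\delta_j}{\ell_{e_0}}\right)^{p-1}-1\right]\int_0^{\ell_{e_0}+\delta_j}|u_{\delta_j}'(x)|^p\,dx,
\]
where the bracketed factor tends to $0$ and the integral is bounded by $\lambda(\delta_j)$. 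By Lemma \ref{lem:continuidad}, $\lambda(\delta_j)\to\lambda(0)=\|u_0'\|_{L^p(\Gamma)}^p$, so $\|v_{\delta_j}'\|_{L^p(\Gamma)}\to\|u_0'\|_{L^p(\Gamma)}$. Since $L^p(\Gamma)$ is uniformly convex for $1<p<\infty$, weak convergence together with convergence of norms yields $v_{\delta_j}'\to u_0'$ strongly in $L^p(\Gamma)$. Because the Dirichlet seminorm is equivalent to the $W^{1,p}$ norm on $\mathcal{X}(\Gamma,\V_D)$, this is precisely strong convergence $v_{\delta_j}\to u_0$ in $\mathcal{X}(\Gamma,\V_D)$.

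The two remaining limits follow from the change of variables $y=\frac{\ell_{e_0}+\delta_j}{\ell_{e_0}}\,x$ on $I_{e_0}$, which gives
\[
\int_0^{\ell_{e_0}+\delta_j}|u_{\delta_j}(y)|^p\,dy = \frac{\ell_{e_0}+\delta_j}{\ell_{e_0}}\int_0^{\ell_{e_0}}|v_{\delta_j}(x)|^p\,dx,
\]
\[
\int_0^{\ell_{e_0}+\delta_j}|u_{\delta_j}'(y)|^p\,dy = \left(\frac{\ell_{e_0}}{\ell_{e_0}+\delta_j}\right)^{p-1}\int_0^{\ell_{e_0}}|v_{\delta_j}'(x)|^p\,dx.
\]
Since the prefactors tend to $1$ and the strong convergence $v_{\delta_j}\to u_0$ in $\mathcal{X}(\Gamma,\V_D)$ implies convergence of both $\int_0^{\ell_{e_0}}|v_{\delta_j}|^p\,dx$ and $\int_0^{\ell_{e_0}}|v_{\delta_j}'|^p\,dx$, the two integrals tend to $\int_0^{\ell_{e_0}}|u_0(x)|^p\,dx$ and $\int_0^{\ell_{e_0}}|u_0'(x)|^p\,dx$, respectively, which finishes the plan.
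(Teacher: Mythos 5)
Your proposal is correct and follows essentially the same route as the paper: extract the subsequence via the bounds \eqref{eq:cont4}--\eqref{eq:cont5} and Lemma \ref{lem:continuidad}, identify $u_0$ as a normalized eigenfunction by weak lower semicontinuity, upgrade to strong convergence via uniform convexity (weak convergence plus convergence of norms), and finish with the change of variables on $I_{e_0}$. The only cosmetic difference is that you apply the Radon--Riesz argument to the derivatives in $L^p(\Gamma)$ rather than to the full $W^{1,p}(\Gamma)$ norm as the paper does, which is equivalent here since $\V_D\neq\emptyset$ makes the Dirichlet seminorm an equivalent norm on $\X(\Gamma,\V_D)$.
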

\begin{proof}
	Let $\{\delta_j\}_{j\in\mathbb{N}}$ such that $\delta_j\to 0$ as 
	$j\to \infty.$ By \eqref{eq:cont4} and \eqref{eq:cont5}, 
	we have that 
	\begin{align*}
			\|v_{\delta_j}\|_{L^p(\Gamma)}^p&=
			1-\dfrac{\delta_j}{\ell_{e_0}+\delta_j}
			\int_{0}^{\ell_{e_0}+\delta_j}\left
			|u_j(x)\right|^p
			\, dx\\
		\|v_{\delta_j}^\prime\|_{L^p(\Gamma)}^p&=\lambda(\delta_j)+
		\left[\left(1+\dfrac{\delta_j}{\ell_{e_0}}\right)^{p-1}
		-1\right]\int_{0}^{\ell_{e_0}+\delta_j}\left
			|u_{\delta_j}^\prime(x)\right|^p
			\, dx.
	\end{align*}
	By Lemma \ref{lem:continuidad}, we have that $\lambda(\delta_j)\to \lambda(0).$
	Then $\{v_{\delta_j}\}_{j\in\mathbb{N}}$ is bounded in $\X(\Gamma,V_D),$
	\begin{align}
		\label{eq:cocont1}&\|v_{\delta_j}\|_{L^p(\Gamma)}^p\to 1,\\
		\label{eq:cocont2}&\|v_{\delta_j}\|_{L^p(\Gamma)}^p\to \lambda(0),
	\end{align}
	as $j\to\infty.$
	Therefore there exists
	a subsequence (still denoted $\{v_{\delta_j}\}_{j\in\mathbb{N}}$) 
	and $u_0\in\X(\Gamma,\V_D)$ such that
	\begin{align*}
		v_{\delta_j}
		\rightharpoonup u_0 &\quad\text{ weakly in } W^{1,p}(\Gamma),\\
		v_{\delta_j}\to u_0 &\quad\text{ strongly in } L^{p}(\Gamma).  
	\end{align*}
	Then, by \eqref{eq:cont1}, we have $\|u_0\|_{L^p(\Gamma)}=1.$ 
	In addition, by \eqref{eq:cont2}, 
	we get 
	\[
		\lambda(0)=\int_\Gamma |u_0^{\prime}(x)|^p
		\, dx
		\le
		\liminf_{j\to+\infty}
		\int_\Gamma |v_j^{\prime}(x)|^p
		\, dx=\lambda(0).
	\]
	Therefore $u_0$ is an eigenfunction associated to $\lambda(0)$ and
	$$\|v_{\delta_j}\|_{W^{1,p}(\Gamma)}\to 
	\|u_{0}\|_{W^{1,p}(\Gamma)}$$ as $j\to\infty.$ Since 
	$v_{\delta_j}\rightharpoonup u_0$  weakly in  $W^{1,p}(\Gamma),$
	we have that $v_{\delta_j}\to u_0$  strongly in  
	$W^{1,p}(\Gamma).$ Then $v_{\delta_j}\to u_0$ strongly in  
	$W^{1,p}(I_{e_0})$ and hence
	\begin{align*}
		\int_{0}^{\ell_{e_0}} |u_0|^p\, dx&=\lim_{j\to \infty}
		\int_{0}^{\ell_{e_0}} |v_{\delta_j}|^p\, dx
		=\lim_{j\to\infty}
		\left(\dfrac{\ell_{e_0}}{\ell_{e_0}+\delta_j}\right)
		\int_{0}^{\ell_{e_0}+\delta_j}\left
			|u_{\delta_j}(x)\right|^p
			\, dx,\\
		\int_{0}^{\ell_{e_0}} |u_0^\prime|^p\, dx&=\lim_{j\to \infty}
		\int_{0}^{\ell_{e_0}} |v_{\delta_j}^\prime|^p dx
		=\lim_{j\to\infty}
		\left(1+\dfrac{\delta_j}{\ell_{e_0}}\right)^{p-1}
		\int_{0}^{\ell_{e_0}+\delta_j}\left
			|u_{\delta_j}^\prime(x)\right|^p
			 dx,
\end{align*}
	that is
	\begin{align*}
	 	\int_{0}^{\ell_{e_0}} |u_0(x)|^p\, dx&= \lim_{j\to\infty}
		 \int_{0}^{\ell_{e_0}+\delta_j}\left
			|u_{\delta_j}(x)\right|^p
			 dx,\\
		\int_{0}^{\ell_{e_0}} |u_0^\prime (x)|^p\, dx&=\lim_{j\to\infty}
		\int_{0}^{\ell_{e_0}+\delta_j}\left
			|u_j^\prime(x)\right|^p
			 dx,
	\end{align*}
	which completes the proof.\qed
\end{proof}

Before proving that the function $\lambda$ is differentiable at $\delta=0$ 
when the first eigenvalue is simple, we will show that, 
in the general case, $\lambda$ is differentiable from the left and from the right  at 
$\delta=0.$

\begin{lem}\label{lem:derivabilidad}
	Let $\Gamma$ be a connected compact metric graph, $\V_D$ 
	be a non-empty subset of $\V(\Gamma)$ and $p\in(1,+\infty).$ 
	Then the function $\lambda(\delta)$ is left and right 
	differentiable at $\delta=0$ and 
	\begin{align*}
		&\lim_{\delta\to0^+}\dfrac{\lambda(\delta)-\lambda(0)}{\delta}=
	 	\min_{u\in \mathcal{E}}
	 	\left\{-\dfrac{(p-1)}{\ell_{e_0}}\int_{0}^{\ell_{e_0}} 
		\left|u_0^\prime\, \right|^p 
		-\dfrac{\lambda(0)}{\ell_{e_0}}
		\int_{0}^{\ell_{e_0}} \left|u_0\right|^p
		\right\},\\
		&\lim_{\delta\to0^-}\dfrac{\lambda(\delta)-\lambda(0)}{\delta}=
	 	\max_{u\in \mathcal{E}}
	 	\left\{-\dfrac{(p-1)}{\ell_{e_0}}\int_{0}^{\ell_{e_0}} 
		\left|v_0^\prime \, \right|^p 
		-\dfrac{\lambda(0)}{\ell_{e_0}}
		\int_{0}^{\ell_{e_0}} \left|v_0\right|^p\right\},
	\end{align*}
where $\mathcal{E}$ is the set of eigenfunctions $u$
	associated to $\lambda(0)$ normalized 
	with $\|u\|_{L^p(\Gamma)}=1.$
\end{lem}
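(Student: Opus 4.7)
\textbf{Proof plan for Lemma \ref{lem:derivabilidad}.}
The plan is to extract differentiability at $\delta=0^{\pm}$ directly from the two comparison inequalities \eqref{eq:cont1} and \eqref{eq:cont6} already established in the proof of Lemma \ref{lem:continuidad}, Taylor-expanding in $\delta$ and controlling the passage to the limit on the edge $e_0$ via Corollary \ref{co:continuidad}. I will focus on the right derivative; the left one is obtained by the same reasoning, with the inequalities swapped because division by $\delta<0$ reverses them, which is exactly what turns the $\min$ into a $\max$.

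First, for the \emph{upper bound} on the right derivative, I fix an arbitrary $u\in\mathcal{E}$ and rewrite \eqref{eq:cont1} in the form
\[
\frac{\lambda(\delta)-\lambda(0)}{\delta}\le \frac{1}{\delta}\Bigl[\bigl(\tfrac{\ell_{e_0}}{\ell_{e_0}+\delta}\bigr)^{p-1}-1\Bigr]\int_0^{\ell_{e_0}}|u'|^p\,dx \;-\;\lambda(\delta)\cdot\frac{1}{\ell_{e_0}}\int_0^{\ell_{e_0}}|u|^p\,dx .
\]
Using the expansion $(1+t)^{p-1}=1+(p-1)t+o(t)$, the first coefficient tends to $-(p-1)/\ell_{e_0}$, and by Lemma \ref{lem:continuidad} $\lambda(\delta)\to\lambda(0)$. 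Taking $\limsup$ as $\delta\to 0^+$ and then the infimum over $u\in\mathcal{E}$ yields
\[
\limsup_{\delta\to 0^+}\frac{\lambda(\delta)-\lambda(0)}{\delta}\le \inf_{u\in\mathcal{E}}\Bigl\{-\tfrac{p-1}{\ell_{e_0}}\int_0^{\ell_{e_0}}|u'|^p - \tfrac{\lambda(0)}{\ell_{e_0}}\int_0^{\ell_{e_0}}|u|^p\Bigr\}.
\]

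Next, for the matching \emph{lower bound}, I start from \eqref{eq:cont6}, applied to a normalized eigenfunction $u_\delta$ of $\lambda(\delta)$, and rearrange to get
\[
\frac{\lambda(\delta)-\lambda(0)}{\delta}\ge -\lambda(0)\cdot\frac{1}{\ell_{e_0}+\delta}\int_0^{\ell_{e_0}+\delta}|u_\delta|^p\,dx-\frac{1}{\delta}\Bigl[\bigl(1+\tfrac{\delta}{\ell_{e_0}}\bigr)^{p-1}-1\Bigr]\int_0^{\ell_{e_0}+\delta}|u'_\delta|^p\,dx .
\]
Pick any sequence $\delta_j\to 0^+$ realizing $\liminf_{\delta\to 0^+}(\lambda(\delta)-\lambda(0))/\delta$ and, by Corollary \ref{co:continuidad}, extract a further subsequence (still denoted $\delta_j$) such that the edge integrals $\int_0^{\ell_{e_0}+\delta_j}|u_{\delta_j}|^p$ and $\int_0^{\ell_{e_0}+\delta_j}|u'_{\delta_j}|^p$ converge to $\int_0^{\ell_{e_0}}|u_0|^p$ and $\int_0^{\ell_{e_0}}|u_0'|^p$ respectively, for some $u_0\in\mathcal{E}$. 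Passing to the limit along this subsequence and using the same Taylor expansion as before, I obtain
\[
\liminf_{\delta\to 0^+}\frac{\lambda(\delta)-\lambda(0)}{\delta}\ge -\tfrac{p-1}{\ell_{e_0}}\int_0^{\ell_{e_0}}|u_0'|^p-\tfrac{\lambda(0)}{\ell_{e_0}}\int_0^{\ell_{e_0}}|u_0|^p\ge \inf_{u\in\mathcal{E}}\{\cdots\}.
\]
Combining both inequalities forces equality throughout and shows that the infimum is actually attained at the specific $u_0$ extracted, so it is a $\min$, and the right derivative exists with the claimed value.

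For the left derivative, I repeat the argument with $\delta<0$. Dividing \eqref{eq:cont1} by $\delta<0$ reverses the inequality, giving a \emph{lower} bound for $(\lambda(\delta)-\lambda(0))/\delta$ in terms of an arbitrary $u\in\mathcal{E}$; taking the supremum over $\mathcal{E}$ bounds $\liminf_{\delta\to 0^-}$ from below by $\sup_{u\in\mathcal{E}}\{\cdots\}$. Symmetrically, \eqref{eq:cont6} divided by $\delta<0$ provides an upper bound realized, up to a subsequence, at some $v_0\in\mathcal{E}$ via Corollary \ref{co:continuidad}, so $\limsup_{\delta\to 0^-}$ is bounded above by $\sup_{u\in\mathcal{E}}\{\cdots\}$, and the two match at the attained value, giving the stated $\max$. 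The main technical point — and the step I expect to require the most care — is the simultaneous convergence of both edge integrals $\int_0^{\ell_{e_0}+\delta}|u_\delta|^p$ and $\int_0^{\ell_{e_0}+\delta}|u'_\delta|^p$ to those of a single limit eigenfunction $u_0\in\mathcal{E}$; this is precisely what Corollary \ref{co:continuidad} was designed to supply, so once it is invoked the rest is bookkeeping with the Taylor expansion of $(1+\delta/\ell_{e_0})^{p-1}$.
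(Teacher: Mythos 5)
Your proposal is correct and follows essentially the same route as the paper: the upper bound for the right difference quotient from \eqref{eq:cont1} tested with an arbitrary normalized eigenfunction, the matching lower bound from \eqref{eq:cont6} along a minimizing sequence $\delta_j\to 0^+$ using Corollary \ref{co:continuidad} to pass to the limit in the edge integrals and produce the attaining eigenfunction $u_0$, and the sign-reversed version of the same two estimates for the left derivative. The only (harmless) cosmetic difference is that you clear the denominator in \eqref{eq:cont1} before dividing by $\delta$, whereas the paper keeps the quotient and uses that the denominator tends to $1$.
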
 

\begin{proof}
	We split the proof in several steps.

	{\it Step 1.} We start by showing that
	\[
		\limsup_{\delta\to0^+}\dfrac{\lambda(\delta)-\lambda(0)}{\delta}
		\le -\dfrac{(p-1)}{\ell_{e_0}}\int_{0}^{\ell_{e_0}} 
		\left|u^\prime(x)\, \right|^p dx
		-\dfrac{\lambda(0)}{\ell_{e_0}}
		\int_{0}^{\ell_{e_0}} \left|u(x)\right|^p\, dx
	\]
	for any eigenfunction $u$ associated to $\lambda(0)$ normalized
	by $\|u\|_{L^p(\Gamma)}=1.$

	Let $u$ be an  eigenfunction associated to $\lambda(0)$ normalized
	by $\|u\|_{L^p(\Gamma)}=1.$ By \eqref{eq:cont1}, we have
	\[
		\lambda(\delta)-\lambda(0)\le
		\dfrac{\displaystyle
		\left[\left(\dfrac{\ell_{e_0}}{\ell_{e_0}+\delta}\right)^{p-1}
		-1\right]\int_{0}^{\ell_{e_0}} 
		\left|u^\prime(x)\, \right|^p
		dx-\lambda(0)\dfrac{\delta}{\ell_{e_0}}
		\int_{0}^{\ell_{e_0}} \left|u(x)\right|^p\, dx}
		{1+\displaystyle\dfrac{\delta}{\ell_{e_0}}
		\int_{0}^{\ell_{e_0}} \left|u(x)\right|^p\, dx}
	\]
	for all $\delta.$ Then
	\[
		\dfrac{\lambda(\delta)-\lambda(0)}{\delta}\le
		\dfrac{\displaystyle
		\dfrac{\left(\dfrac{\ell_{e_0}}{\ell_{e_0}+\delta}\right)^{p-1}
		-1}{\delta}\int_{0}^{\ell_{e_0}} 
		\left|u^\prime(x)\, \right|^p
		dx-\dfrac{\lambda(0)}{\ell_{e_0}}
		\int_{0}^{\ell_{e_0}} \left|u(x)\right|^p\, dx}
		{1+\displaystyle\dfrac{\delta}{\ell_{e_0}}
		\int_{0}^{\ell_{e_0}} \left|u(x)\right|^p\, dx}
	\]
	for all $\delta>0.$ Therefore
	\[
		\limsup_{\delta\to0^+}\dfrac{\lambda(\delta)-\lambda(0)}{\delta}
		\le -\dfrac{(p-1)}{\ell_{e_0}}\int_{0}^{\ell_{e_0}} 
		\left|u^\prime(x)\, \right|^p dx
		-\dfrac{\lambda(0)}{\ell_{e_0}}
		\int_{0}^{\ell_{e_0}} \left|u(x)\right|^p\, dx.
	\]
	
	\medskip
	
	{\it Step 2.} With a similar procedure, we obtain
	\[
		\liminf_{\delta\to0^-}\dfrac{\lambda(\delta)-\lambda(0)}{\delta}
		\ge -\dfrac{(p-1)}{\ell_{e_0}}\int_{0}^{\ell_{e_0}} 
		\left|u^\prime(x)\, \right|^p dx
		-\dfrac{\lambda(0)}{\ell_{e_0}}
		\int_{0}^{\ell_{e_0}} \left|u(x)\right|^p\, dx.
	\]
	for any eigenfunction $u$ associated to $\lambda(0)$ normalized
	by $\|u\|_{L^p(\Gamma)}=1.$
	
	\medskip
	
	{\it Step 3.} Now we show that there exists an eigenfunction
	$u_0$ associated to $\lambda(0)$ normalized by $\|u_0\|_{L^p(\Gamma)}=1$
	such that 
	\[
		\liminf_{\delta\to0^+}\dfrac{\lambda(\delta)-\lambda(0)}{\delta}
		\ge -\dfrac{(p-1)}{\ell_{e_0}}\int_{0}^{\ell_{e_0}} 
		\left|u_0^\prime(x)\, \right|^p dx
		-\dfrac{\lambda(0)}{\ell_{e_0}}
		\int_{0}^{\ell_{e_0}} \left|u_0(x)\right|^p\, dx.
	\]
	
	Let $u_{\delta}$ be an eigenfunction 
	associated to $\lambda(\delta)$ normalized by 
	$\|u_\delta\|_{L^p(\Gamma_\delta)}=1.$ By \eqref{eq:cont6}, we have
	\[
		\lambda(\delta)-\lambda(0)\ge
		-\dfrac{A(\delta)}{B(\delta)}
		\qquad \forall\delta
	\]
	where
	\begin{align*}
		A(\delta) &=\dfrac{\lambda(\delta)\delta}
		{(\ell_{e_0}+\delta)}
			\int_{0}^{\ell_{e_0}+\delta}\left
			|u_\delta(x)\right|^p
			 dx-
		\left[\left(1+\nicefrac{\delta}{\ell_{e_0}}\right)^{p-1}
		-1\right]\int_{0}^{\ell_{e_0}+\delta_j}\left
			|u_\delta^\prime(x)\right|^p
			 dx\\
			B(\delta)&=1-\dfrac{\delta}{(\ell_{e_0}+\delta)}
			\int_{0}^{\ell_{e_0}+\delta}\left
			|u_\delta(x)\right|^p
			 dx.
	\end{align*}
	Then
	\begin{equation}\label{eq:deriv0}
			\dfrac{\lambda(\delta)-\lambda(0)}{\delta}\ge
			\dfrac{\dfrac{A(\delta)}{\delta}}{B(\delta)}
	\end{equation}
	for all $\delta>0.$
	Let $\{\delta_{j}\}_{j\in\mathbb{N}}$ such that $\delta_j\to0^+$ as
	$j\to \infty$ and
	\begin{equation}\label{eq:deriv1}
		\lim_{j\to +\infty}\dfrac{\lambda(\delta_j)-\lambda(0)}{\delta_j}
		=\liminf_{\delta\to0^+}
		\dfrac{\lambda(\delta)-\lambda(0)}{\delta}.
	\end{equation}
	Then, by Corollary \ref{co:continuidad}, there exist a
	subsequence (still denoted $\delta_j$) and an eigenfunction 
	$u_0$ associated to $\lambda(0)$ such that
	\begin{align*}
		\|u_0\|_{L^p(\Gamma)}&=1,\\
	 	 \lim_{j\to\infty}
		 \int_{0}^{\ell_{e_0}+\delta_j}\left
			|u_{\delta_j}(x)\right|^p
			\, dx&=\int_{0}^{\ell_{e_0}} |u_0 (x)|^p\, dx,\\
		\lim_{j\to\infty}
		\int_{0}^{\ell_{e_0}+\delta_j}\left
			|u_j^\prime(x)\right|^p
			\, dx&=\int_{0}^{\ell_{e_0}} |u_0^\prime (x)|^p\, dx.
	\end{align*}
	Therefore 
	\begin{align*}
		\lim_{j\to+\infty}\dfrac{A(\delta_j)}{\delta_j}&=
		-\dfrac{(p-1)}{\ell_{e_0}}\int_{0}^{\ell_{e_0}} 
		\left|u_0^\prime(x)\, \right|^p dx
		-\dfrac{\lambda(0)}{\ell_{e_0}}
		\int_{0}^{\ell_{e_0}} \left|u_0(x)\right|^p\, dx,\\
		\lim_{j\to+\infty}B(\delta_j)&=1.
	\end{align*}
	In addition, by \eqref{eq:deriv0} and \eqref{eq:deriv1}, we get
	\[
		\liminf_{\delta\to0^+}\dfrac{\lambda(\delta)-\lambda(0)}{\delta}
		\ge -\dfrac{(p-1)}{\ell_{e_0}}\int_{0}^{\ell_{e_0}} 
		\left|u_0^\prime(x)\, \right|^p dx
		-\dfrac{\lambda(0)}{\ell_{e_0}}
		\int_{0}^{\ell_{e_0}} \left|u_0(x)\right|^p\, dx.
	\]
	Hence, by {\it step 1}, we have that
	\[
		\lim_{\delta\to0^+}\dfrac{\lambda(\delta)-\lambda(0)}{\delta}
		= -\dfrac{(p-1)}{\ell_{e_0}}\int_{0}^{\ell_{e_0}} 
		\left|u_0^\prime(x)\, \right|^p dx
		-\dfrac{\lambda(0)}{\ell_{e_0}}
		\int_{0}^{\ell_{e_0}} \left|u_0(x)\right|^p\, dx.
	\]

	{\it Step 4.} In the same way, we can show that
	there exists an eigenfunction
	$v_0$ associated to $\lambda(0)$ such that 
	\[
		\lim_{\delta\to0^-}\dfrac{\lambda(\delta)-\lambda(0)}{\delta}
		= -\dfrac{(p-1)}{\ell_{e_0}}\int_{0}^{\ell_{e_0}} 
		\left|v_0^\prime(x)\, \right|^p dx
		-\dfrac{\lambda(0)}{\ell_{e_0}}
		\int_{0}^{\ell_{e_0}} \left|v_0(x)\right|^p\, dx.
	\]\qed
\end{proof}

Thus, if the first eigenvalue is simple then the function $\lambda(\delta)$ 
is differentiable at $\delta=0.$

\begin{teo}\label{teo:derivabilidad}
	Let $\Gamma$ be a connected compact metric graph, $\V_D$ 
	be a non-empty subset of $\V(\Gamma)$ and $p\in(1,+\infty).$ 
	If the first eignevalue $\lambda_{1,p}(\Gamma,\V_D)$ is simple, then 
	the function $\lambda(\delta)$ is
	differentiable at $\delta=0$ and 
	\[
	 	\lambda^\prime(0)=
	 	-\dfrac{(p-1)}{\ell_{e_0}}\int_{0}^{\ell_{e_0}} 
		\left|u_0^\prime(x)\, \right|^p dx
		-\dfrac{\lambda(0)}{\ell_{e_0}}
		\int_{0}^{\ell_{e_0}} \left|u_0(x)\right|^p\, dx
	\]
	where $u_0$ is an eigenfunction  associated to $\lambda(0)$ normalized
	by $\|u\|_{L^p(\Gamma)}=1.$
\end{teo}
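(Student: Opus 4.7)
The plan is to deduce differentiability at $\delta=0$ directly from Lemma \ref{lem:derivabilidad}, which already provides explicit expressions for the one-sided derivatives as a minimum and maximum, respectively, over the set $\mathcal{E}$ of $L^p$-normalized eigenfunctions associated with $\lambda(0)$. The only thing left to prove is that, under the simplicity hypothesis, these two quantities coincide. So my work reduces entirely to understanding what $\mathcal{E}$ looks like when $\lambda_{1,p}(\Gamma,\V_D)$ is simple and checking invariance of the associated functional.

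First I would describe $\mathcal{E}$ under the simplicity assumption. If $\lambda_{1,p}(\Gamma,\V_D)$ is simple, the eigenspace has dimension one, so every eigenfunction is of the form $c\,u_0$ for a fixed $u_0$. Imposing $\|c\,u_0\|_{L^p(\Gamma)}=1$ together with $\|u_0\|_{L^p(\Gamma)}=1$ forces $|c|=1$, and since we are working with real-valued functions, $c=\pm 1$. Hence $\mathcal{E}=\{u_0,-u_0\}$.

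Next I would observe that the functional
\[
\Phi(u)\coloneqq -\dfrac{p-1}{\ell_{e_0}}\int_{0}^{\ell_{e_0}}|u^\prime(x)|^p\,dx-\dfrac{\lambda(0)}{\ell_{e_0}}\int_{0}^{\ell_{e_0}}|u(x)|^p\,dx
\]
appearing in Lemma \ref{lem:derivabilidad} depends on $u$ only through $|u|$ and $|u^\prime|$, so $\Phi(u)=\Phi(-u)$. Consequently,
\[
\min_{u\in\mathcal{E}}\Phi(u)=\Phi(u_0)=\max_{u\in\mathcal{E}}\Phi(u).
\]
Combining this equality with the one-sided derivative formulas of Lemma \ref{lem:derivabilidad} yields
\[
\lim_{\delta\to0^+}\dfrac{\lambda(\delta)-\lambda(0)}{\delta}=\Phi(u_0)=\lim_{\delta\to0^-}\dfrac{\lambda(\delta)-\lambda(0)}{\delta},
\]
which is exactly the statement that $\lambda$ is differentiable at $\delta=0$ with $\lambda^\prime(0)=\Phi(u_0)$, the formula claimed in the theorem.

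There is essentially no serious obstacle here: all the analytic work (compactness of the rescaled eigenfunctions, identification of the limit with an eigenfunction, and the variational one-sided bounds) has already been carried out in Lemma \ref{lem:continuidad}, Corollary \ref{co:continuidad}, and Lemma \ref{lem:derivabilidad}. The only subtle point is to make explicit that simplicity in the nonlinear setting means one-dimensionality of the eigenspace (so that normalization fixes $u_0$ up to sign) and to notice the even symmetry of $\Phi$. With those two observations the theorem is an immediate consequence of the previous lemma.
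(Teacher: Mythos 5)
Your proposal is correct and is essentially the paper's own argument: the paper derives Theorem \ref{teo:derivabilidad} as an immediate consequence of Lemma \ref{lem:derivabilidad}, the point being exactly that simplicity reduces $\mathcal{E}$ to $\{u_0,-u_0\}$ and the functional in the one-sided derivative formulas is invariant under $u\mapsto -u$, so the minimum and maximum coincide. Your write-up just makes explicit the two observations the paper leaves implicit.
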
 

\begin{re}  
	Note that the result of Theorem \ref{teo:derivabilidad} 
	does not hold if we remove the assumption that the 
	first eigenvalue is simple.
	For example, let $\Gamma$  defined as in Remark \ref{re:nosimple}
	and $e_0=[\v_2,\v_3]$ we have that
	\begin{align*}
		\lim_{t\to0^+}\dfrac{\lambda(\delta)-\lambda(0)}{\delta}&=
	 	\min_{u\in \mathcal{E}}
	 	\left\{-\dfrac{(p-1)}{L}\int_{0}^{\ell_{e_0}} 
		\left|u_0^\prime \, \right|^p 
		-\dfrac{\lambda(0)}{L}
		\int_{0}^{\ell_{e_0}} \left|u_0\right|^p 
		\right\}\\
		&=\min_{u\in \mathcal{E}}
	 	\left\{-\dfrac{p\lambda(0)}{L}\int_{0}^{\ell_{e_0}} 
		\left|u_0^\prime \, \right|^p 
		\right\}
		=-\dfrac{p\lambda(0)}{L},
	\end{align*}
	\begin{align*}
		\lim_{t\to0^-}\dfrac{\lambda(\delta)-\lambda(0)}{\delta}&=
	 	\max_{u\in \mathcal{E}}
	 	\left\{-\dfrac{(p-1)}{\ell_{e_0}}\int_{0}^{\ell_{e_0}} 
		\left|v_0^\prime \, \right|^p 
		-\dfrac{\lambda(0)}{\ell_{e_0}}
		\int_{0}^{\ell_{e_0}} \left|v_0 \right|^p \right\}\\
		&=\max_{u\in \mathcal{E}}
	 	\left\{-\dfrac{p\lambda(0)}{L}\int_{0}^{\ell_{e_0}} 
		\left|u_0^\prime \, \right|^p 
		\right\} =0.
	\end{align*}
	Hence $\lambda$ is not differentiable (but Lipschitz) at $\delta=0.$
\end{re} 

\section{The limit as $p \to \infty$.} \label{sect-infty}

In this section we deal with the limit as $p\to \infty$ of the eigenvalue 
problem \eqref{eq:autovalor}.

\begin{teo} \label{teo:infty} 
	Let $\Gamma$ be a connected compact metric graph,
	$\V_D$ be a non-empty subset of $\V(\Gamma),$ and
	$u_p$ be a minimizer for \eqref{eq:autovalor} normalized by 
	$\|u_p\|_{L^p(\Gamma)}=1.$ 
	Then, there exists a sequence $p_j \to \infty$ such that 
	\[
		u_{p_j} \to u_\infty
	\]
	uniformly in $\Gamma$ and weakly in  $W^{1,q} (\Gamma)$ 
	for every $q < \infty$. 

	Moreover, any possible limit $u_\infty$ is a minimizer for 
	\[
		\Lambda_\infty(\Gamma,\V_D) = 
		\inf
		\left\{\dfrac{\| v^{\prime} \|_{L^\infty(\Gamma)}}
		{\|v \|_{L^\infty(\Gamma)}}
	 	\colon v\in W^{1,\infty}(\Gamma), v =0 \mbox{ on } \V_D,
	 	v\neq0
	 	\right\}.
	\]
	This value $\Lambda_\infty(\Gamma,\V_D)$ is the limit of 
	$\lambda_{1,p}(\Gamma,V_D)^{1/p}$ and can be characterized as 
	\[
		\Lambda_\infty(\Gamma,\V_D) 
		= \dfrac{1}{\displaystyle\max_{x \in \Gamma} d(x,\V_D)}.
	\]
\end{teo}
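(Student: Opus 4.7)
The plan is to sandwich $\Lambda_\infty(\Gamma,\V_D)$ between $\liminf_p \lambda_{1,p}(\Gamma,\V_D)^{1/p}$ and $\limsup_p \lambda_{1,p}(\Gamma,\V_D)^{1/p}$, and show both equal $1/M$, where $M := \max_{x\in\Gamma} d(x,\V_D)$. The object doing most of the work is the distance function $\rho(x) := d(x,\V_D)$, which lies in $\X(\Gamma,\V_D)\cap W^{1,\infty}(\Gamma)$ and satisfies $|\rho'|=1$ almost everywhere on $\Gamma$ (on each edge, $\rho$ is the minimum of two affine functions of slope $\pm 1$, so $|\rho'|=1$ off a finite set).

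First I would establish the upper bound $\limsup_p \lambda_{1,p}^{1/p} \le 1/M$ by plugging $\rho$ into \eqref{eq:autovalor}: since $\|\rho'\|_{L^p(\Gamma)} = \ell(\Gamma)^{1/p}$ and $\|\rho\|_{L^p(\Gamma)} \to \|\rho\|_{L^\infty(\Gamma)} = M$, we obtain $\lambda_{1,p}^{1/p} \le \ell(\Gamma)^{1/p}/\|\rho\|_{L^p(\Gamma)} \to 1/M$. The matching lower bound $\Lambda_\infty(\Gamma,\V_D) \ge 1/M$ follows from the simple observation that for any admissible $v\neq 0$, picking $x_0 \in \Gamma$ with $|v(x_0)|=\|v\|_{L^\infty(\Gamma)}$ and integrating along a shortest path $\gamma$ from $x_0$ to $\V_D$ yields $\|v\|_{L^\infty(\Gamma)}\le \int_\gamma |v'|\le M\,\|v'\|_{L^\infty(\Gamma)}$.

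For compactness, normalize $\|u_p\|_{L^p(\Gamma)}=1$. For fixed $q<\infty$ and $p\ge q$, H\"older gives $\|u_p'\|_{L^q(\Gamma)}\le \lambda_{1,p}^{1/p}\ell(\Gamma)^{1/q-1/p}$ and $\|u_p\|_{L^q(\Gamma)}\le \ell(\Gamma)^{1/q-1/p}$, both uniformly bounded thanks to the upper bound above. Theorem~\ref{teo:inccomp} (which already includes compactness into $L^\infty$, i.e.\ uniform convergence) together with a diagonal procedure produces a subsequence with $u_{p_j}\to u_\infty$ uniformly on $\Gamma$ and $u_{p_j}\rightharpoonup u_\infty$ weakly in $W^{1,q}(\Gamma)$ for every $q<\infty$. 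The limit is nontrivial because $\|u_p\|_{L^\infty(\Gamma)}\ge \ell(\Gamma)^{-1/p}\to 1$; uniform convergence preserves both the continuity at vertices and the Dirichlet condition, so $u_\infty \in W^{1,\infty}(\Gamma)$ is admissible for $\Lambda_\infty$. Weak lower semicontinuity and H\"older yield $\|u_\infty'\|_{L^q(\Gamma)} \le \liminf_j \lambda_{1,p_j}^{1/p_j}\,\ell(\Gamma)^{1/q}$, and letting $q\to\infty$ gives $\|u_\infty'\|_{L^\infty(\Gamma)}\le \liminf_j \lambda_{1,p_j}^{1/p_j}$. Chaining everything,
\[
\Lambda_\infty(\Gamma,\V_D)\le \frac{\|u_\infty'\|_{L^\infty(\Gamma)}}{\|u_\infty\|_{L^\infty(\Gamma)}}\le \liminf_p \lambda_{1,p}^{1/p}\le \limsup_p \lambda_{1,p}^{1/p}\le \frac{1}{M}\le \Lambda_\infty(\Gamma,\V_D),
\]
so every inequality is in fact an equality. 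This simultaneously proves $\Lambda_\infty(\Gamma,\V_D)=1/M$, the full convergence $\lambda_{1,p}^{1/p}\to \Lambda_\infty$, and that $u_\infty$ attains the infimum defining $\Lambda_\infty$.

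The main delicate points are verifying $|\rho'|=1$ almost everywhere on $\Gamma$ (for which the finiteness of the graph is essential: on each single edge at most one ``ridge'' point separates the two competing geodesics to $\V_D$) and making sure that the compactness in Theorem~\ref{teo:inccomp} is strong enough to transfer the vertex continuity and Dirichlet conditions to the limit; the $q=\infty$ clause of that theorem is exactly what is needed. Everything else is bookkeeping with H\"older and weak lower semicontinuity.
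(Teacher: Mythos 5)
Your proof is correct, and its compactness half is exactly the paper's: a uniform bound on $\|u_p'\|_{L^p(\Gamma)}$ from a fixed admissible test function, H\"older to get bounds in $W^{1,q}(\Gamma)$ uniform in $p\ge q$, a diagonal extraction using Theorem \ref{teo:inccomp}, and then weak lower semicontinuity plus $\|\cdot\|_{L^q}\to\|\cdot\|_{L^\infty}$ as $q\to\infty$. Where you genuinely diverge is the endgame. The paper first proves the abstract statement that $u_\infty$ minimizes $\Lambda_\infty(\Gamma,\V_D)$ and that $\lambda_{1,p}(\Gamma,\V_D)^{1/p}\to\Lambda_\infty(\Gamma,\V_D)$ by testing against arbitrary smooth $v$ vanishing on $\V_D$, and only afterwards identifies $\Lambda_\infty(\Gamma,\V_D)=1/M$ in a separate two-sided step using the cone $v(x)=\bigl(1-d(x,z_0)/d(z_0,\V_D)\bigr)_+$. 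You instead use the single test function $\rho=d(\cdot,\V_D)$, whose key property $|\rho'|=1$ a.e.\ (correctly justified: on each edge $\rho$ is the minimum of two affine functions of slope $\pm1$) yields $\limsup_p\lambda_{1,p}^{1/p}\le \ell(\Gamma)^{1/p}/\|\rho\|_{L^p(\Gamma)}\to 1/M$ directly, and you then close one chain of inequalities in which the geometric characterization, the convergence of $\lambda_{1,p}^{1/p}$, and the minimality of $u_\infty$ all drop out simultaneously; your path-integration proof of $\Lambda_\infty(\Gamma,\V_D)\ge 1/M$ is just the integral form of the paper's mean-value step and is, if anything, cleaner. The only point to make explicit is that your displayed chain controls the liminf only along the extracted subsequence $p_j$; to upgrade to the full limit $\lambda_{1,p}^{1/p}\to\Lambda_\infty(\Gamma,\V_D)$ you should start the extraction from a sequence realizing $\liminf_p\lambda_{1,p}^{1/p}$ (your $\limsup$ bound already holds for the whole family, so this is routine).
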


Note that
$$
	\max_{z \in \Gamma} d(x,\V_D)  = 
	\frac12 \max_{z \in \Gamma} \min_{\v \in \V_D} d(x,\v).
$$

\begin{proof} In this proof we use ideas from \cite{JLM2}. Let $u_p$ be an eigenfunction associated with 
$\lambda_{1,p} (\Gamma,V_D)$ 
normalized by $\|u_p\|_{L^p(\Gamma)}=1.$ 
We first prove a uniform bound (independent of $p$) for the $L^p$-norm of 
$u_p^\prime$.To this end, take $v$ any smooth function that vanishes on 
$\V_D$. Using that $u_p$ is a minimizer for \eqref{eq:autovalor} we obtain
$$
	\dfrac{\displaystyle \int_{\Gamma} |u_p^{\prime}(x)|^p\, dx}
	{\displaystyle\int_{\Gamma} |u_p (x)|^p\, dx} \leq 
	\dfrac{\displaystyle \int_{\Gamma} |v^{\prime}(x)|^p\, dx}
	{\displaystyle\int_{\Gamma} |v(x)|^p\, dx},
$$
hence we get
$$
	\left(\int_{\Gamma} |u_p^{\prime}(x)|^p\, dx \right)^{1/p} 
	\leq \left( \dfrac{\displaystyle\int_{\Gamma} |v^{\prime}(x)|^p\, dx}
	{\displaystyle\int_{\Gamma} |v(x)|^p\, dx} \right)^{1/p}.
$$
Now we observe that 
$$
	\left( \dfrac{\displaystyle\int_{\Gamma} |v^{\prime}(x)|^p\, dx}
	{\displaystyle\int_{\Gamma} |v(x)|^p\, dx} \right)^{1/p}\to 
	\dfrac{\| v^{\prime} \|_{L^\infty(\Gamma)} }{\|v \|_{L^\infty(\Gamma)} }
$$
as $p\to \infty$. Therefore, we conclude that there exists a constant $C$ 
independent of $p$ such that
$$
	\left(\int_{\Gamma} |u_p^{\prime}(x)|^p\, dx \right)^{1/p} \leq C.
$$
Then, by H\"older inequality, we have 
\begin{align*}
	\left(\int_{\Gamma} |u_p^{\prime}(x)|^q\, dx \right)^{1/q} 
	&\leq \mathrm{card}(\E(\Gamma))^{\nicefrac1q} 
	\left(\int_{\Gamma} |u_p^{\prime}(x)|^p\, dx \right)^{1/p} 
	\ell(\Gamma)^{(p-q)/pq} \\
	&\leq C  \mathrm{card}(\E(\Gamma))^{\nicefrac1q}
	\ell(\Gamma)^{\nicefrac{(p-q)}{pq}} 
\end{align*}
for all $1\le q\le p.$ Then we obtain that the family 
$\{ u_p \}_{p\ge q}$ is bounded in $W^{1,q} (\Gamma)$ for any 
$q < \infty$ and therefore by a diagonal 
procedure we can extract a sequence 
$p_j \to \infty$ such that
$$
	u_{p_j} \to u_\infty
$$
uniformly in $\Gamma$ and weakly in  $W^{1,q}(\Gamma)$ 
for every $q < \infty$. 

From our previous computations we obtain
$$
	\left(\int_{\Gamma} |u_\infty^{\prime}(x)|^q\, dx \right)^{1/q} 
	\leq \dfrac{\| v^{\prime} \|_{L^\infty(\Gamma)}}
	{\|v \|_{L^\infty(\Gamma)}}
	\mathrm{card}(\E(\Gamma))^{\nicefrac1q}\ell(\Gamma)^{\nicefrac{1}{q}} 
$$
and then (taking $q\to \infty$) we conclude that 
$$
	\| u_\infty^{\prime} \|_{L^\infty(\Gamma)}  \leq 
	\dfrac{\| v^{\prime} \|_{L^\infty(\Gamma)}}{\|v \|_{L^\infty(\Gamma)} },
$$
for every $v$ smooth that vanishes on $\V_D$.

Now, using that $u_{p_j}$ converges uniformly to $u_\infty$ we obtain that
$$
	\| u_\infty \|_{L^\infty(\Gamma)} = 1.
$$
In fact, we have
$$
	\begin{array}{rl}
		\displaystyle 
		\left(\int_{\Gamma} |u_\infty (x)|^p\, dx \right)^{1/p}  
		& \displaystyle \leq
\left(\int_{\Gamma} |u_\infty (x) - u_p (x)|^p\, dx \right)^{1/p} + \left(\int_{\Gamma} |u_p (x)|^p\, dx \right)^{1/p}
\\
& \displaystyle = \left(\int_{\Gamma} |u_\infty (x) - u_p (x)|^p\, dx \right)^{1/p}  + 1. 
\end{array}
$$
Now we have that 
$$
	\left(\int_{\Gamma} |u_\infty (x) - u_p (x)|^p\, dx \right)^{1/p} 
	\leq \| u_\infty - u_p \|_{L^\infty(\Gamma)} \ell(\Gamma)^{\nicefrac1p} 
	\to 0 
$$
as $p \to \infty$ and we conclude that
$
	\| u_\infty  \|_{L^\infty(\Gamma)} \leq 1$.
On the other hand,
$$
\begin{array}{l}
\displaystyle 1= \left(\int_{\Gamma} |u_p (x)|^p\, dx \right)^{1/p}  \\[10pt]
\displaystyle \quad \leq
\left(\int_{\Gamma} |u_\infty (x) - u_p (x)|^p\, dx \right)^{1/p} + \left(\int_{\Gamma} |u_\infty (x)|^p\, dx \right)^{1/p}
\end{array}
$$
and then we obtain the reverse inequality,
$
\| u_\infty  \|_{L^\infty(\Gamma)} \geq 1$.

We have proved that $u_\infty$ is a minimizer for
\[
	\Lambda_\infty(\Gamma,\V_D) = 
	\inf
	\left\{\dfrac{\| v^{\prime} \|_{L^\infty(\Gamma)}}
	{\|v \|_{L^\infty(\Gamma)}}
	 \colon v\in W^{1,\infty}(\Gamma), v =0 \mbox{ on } \V_D,
	 v\neq0
	 \right\}.
\]
and that
$$
	\lambda_{1,p}(\Gamma,\V_D)^{1/p} \to \Lambda_\infty(\Gamma,\V_D)
$$
as $p\to \infty$.

It remains to show that
$$
	\Lambda_\infty(\Gamma,\V_D) = \dfrac{1}{\displaystyle
	\max_{z \in \Gamma} d(z,\V_D) }.
$$
To this end, first let us consider a point $z_0 \in \Gamma$ such that
$$
	\max_{z \in \Gamma} d(z,V_1) = d(z_0, V_1)
$$
and the cone
$$
	v(x) = \left( 1-\frac{1}{d(z_0, \V_D)} d(x, z_0)  \right)_+.
$$
This function $v$ is Lipschitz and vanishes on $V_D$, 
hence it is a competitor for the infimum for $\Lambda_\infty(\Gamma,\V_D)$ and then we get
$$
	\Lambda_\infty(\Gamma,\V_D) \leq \frac{1}{d(z_0, \V_D)} = 
	\dfrac{1}{\max_{z \in \Gamma} d(z,\V_D) }.
$$
To see the reverse inequality we argue as follows: let $v$ be a smooth function 
vanishing on $\V_D$ and normalize it according to 
$\| v \|_{L^\infty(\Gamma)} =1$. 
Let $z_1 \in \Gamma$ be such that $v(z_1)=1$. 
Since $z_1 \in \Gamma$ it holds that
$$
	\max_{z \in \Gamma} d(z,\V_D) \geq d(z_1,\V_D).
$$
Hence there is a vertex $\v\in V_D$ such that 
$$
	\max_{z \in \Gamma} d(z,\V_D) \geq d(z_1,\v),
$$
and we get
$$
	1= v(z_1) - v(\v) = v' (\xi) d(z_1,\v) 
	\leq |v^{\prime} (\xi)|\max_{z \in \Gamma} d(z,\V_D).
$$
We conclude that
$$
	\| v^{\prime} \|_{L^\infty(\Gamma)} \geq 
	\dfrac{1}{\max_{z \in \Gamma} d(z,\V_D) }
$$
and therefore
$$
	\Lambda_\infty(\Gamma,\V_D) 
	\geq  \dfrac{1}{\max_{z \in \Gamma} d(z,\V_D) }.
$$
This ends the proof. \qed
\end{proof}

\section{The limit as $p \to 1$. } \label{sect-1}

In this section we study the other limit case, $p=1$. We will use functions of bounded variation
on the graph (that we will denote by $BV(\Gamma)$) and the perimeter of a subset of the graph (denoted by $\mbox{Per} (D)$). We refer to \cite{ambrosio} for precise definitions and properties of functions and sets in this context. 

\begin{teo} \label{teo-uno} 
	Let $\Gamma$ be a connected compact metric graph,
	$\V_D$ be a non-empty subset of $\V(\Gamma),$ and 
	$u_p$ be a minimizer for \eqref{eq:autovalor} normalized by 
	$\|u_p\|_{L^1(\Gamma)}=1.$
	Then, there exists a sequence $p_j \to 1^+$ such that 
	$$
		u_{p_j} \to u_1
	$$
	in $L^1 (\Gamma).$ 

	Moreover, any possible limit $u_1$ is a minimizer for 
	$$
		\Lambda_1(\Gamma,V_D) = 
		\inf\left\{\dfrac{\| v^{\prime} \|_{BV(\Gamma)}}
		{\|v \|_{L^1(\Gamma)} }\colon v\in BV(\Gamma), v=0 \mbox{ on } 
		\V_D, v\neq0
		\right\}.
	$$
	This value $\Lambda_1(\Gamma,V_D)$ is the limit of 
	$\lambda_{1,p}(\Gamma,V_D)$.
\end{teo}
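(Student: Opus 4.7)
The plan is to mirror the strategy used for Theorem \ref{teo:infty}: establish a uniform upper bound on $\lambda_{1,p}$, derive a priori $BV$-type bounds on the eigenfunctions $u_p$ (yielding $L^1$-compactness), and identify the limit via lower semicontinuity of the total variation.

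\textbf{Uniform upper bound.} For any smooth $v$ vanishing on $\V_D$ with $v\not\equiv 0$, the minimality property gives $\lambda_{1,p}(\Gamma,\V_D)\le \int_\Gamma|v'|^p\,dx / \int_\Gamma|v|^p\,dx$, which tends to $\int_\Gamma|v'|\,dx / \int_\Gamma|v|\,dx$ as $p\to 1^+$ by dominated convergence on the finite-measure set $\Gamma$. By the standard density of smooth Dirichlet competitors in $BV(\Gamma)$ for the strict topology, taking the infimum over $v$ yields $\limsup_{p\to 1^+}\lambda_{1,p}(\Gamma,\V_D)\le \Lambda_1(\Gamma,\V_D)$.

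\textbf{A priori bounds and compactness.} Since $u_p$ vanishes at some $\v\in\V_D$ and $\Gamma$ is connected, integrating along a path gives $\|u_p\|_{L^\infty(\Gamma)}\le \|u_p'\|_{L^1(\Gamma)}\le \ell(\Gamma)^{(p-1)/p}\|u_p'\|_{L^p(\Gamma)}$. The Rayleigh identity combined with $\int_\Gamma|u_p|^p\,dx\le \|u_p\|_{L^\infty}^{p-1}\|u_p\|_{L^1}=\|u_p\|_{L^\infty}^{p-1}$ and the upper bound on $\lambda_{1,p}$ from the first step gives
\[
\|u_p'\|_{L^p(\Gamma)}^p\le C\,\|u_p\|_{L^\infty}^{p-1}\le C\,\ell(\Gamma)^{(p-1)^2/p}\|u_p'\|_{L^p(\Gamma)}^{p-1},
\]
so $\|u_p'\|_{L^p(\Gamma)}$ is uniformly bounded and, via Hölder, so is $\|u_p'\|_{L^1(\Gamma)}$. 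Hence $\{u_p\}$ is bounded in $BV(\Gamma)$; the compact embedding $BV(\Gamma)\hookrightarrow L^1(\Gamma)$ produces a subsequence $p_j\to 1^+$ and $u_1\in BV(\Gamma)$ with $u_{p_j}\to u_1$ in $L^1(\Gamma)$ and pointwise a.e. In particular $\|u_1\|_{L^1(\Gamma)}=1$, and the condition $u_1=0$ on $\V_D$ passes to the limit after the standard BV-trace identification.

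\textbf{Identification of the limit.} Using the uniform $L^\infty$-bound and pointwise convergence, dominated convergence yields $\bigl(\int_\Gamma|u_{p_j}|^{p_j}\,dx\bigr)^{1/p_j}\to \|u_1\|_{L^1(\Gamma)}=1$. Combining Hölder's inequality, the Rayleigh identity, and the lower semicontinuity of the total variation under $L^1$-convergence,
\[
|Du_1|(\Gamma)\le \liminf_{j\to\infty}\int_\Gamma|u_{p_j}'|\,dx\le \liminf_{j\to\infty}\ell(\Gamma)^{(p_j-1)/p_j}\lambda_{1,p_j}^{1/p_j}\Bigl(\int_\Gamma|u_{p_j}|^{p_j}\,dx\Bigr)^{1/p_j}=\liminf_{j\to\infty}\lambda_{1,p_j}.
\]
Since $u_1$ is admissible for $\Lambda_1$, this combines with the first step to give the chain
\[
\Lambda_1(\Gamma,\V_D)\le |Du_1|(\Gamma)\le \liminf_{j\to\infty}\lambda_{1,p_j}\le \limsup_{p\to 1^+}\lambda_{1,p}(\Gamma,\V_D)\le \Lambda_1(\Gamma,\V_D),
\]
forcing equality throughout: $u_1$ attains $\Lambda_1(\Gamma,\V_D)$ and $\lambda_{1,p}(\Gamma,\V_D)\to\Lambda_1(\Gamma,\V_D)$.

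\textbf{Main obstacle.} The delicate point is setting up a consistent notion of the Dirichlet condition $v=0$ on $\V_D$ for $v\in BV(\Gamma)$ (where jumps at vertices are allowed) and proving the corresponding strict-density of smooth admissible competitors used in the upper bound. Once these BV-on-graphs foundations (along the lines of \cite{ambrosio}) are established, the rest is routine compactness and lower semicontinuity.
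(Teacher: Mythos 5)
Your proposal is correct in substance but reaches the key lower bound $\Lambda_1(\Gamma,\V_D)\le\liminf_{p\to1^+}\lambda_{1,p}(\Gamma,\V_D)$ by a genuinely different route. The paper's trick is the substitution $v_p=(u_p)^p$: by H\"older, $\|v_p'\|_{L^1}\le p\,\|u_p'\|_{L^p}$ when $\int_\Gamma|u_p|^p=1$, so $v_p$ is an admissible competitor giving $\Lambda_1(\Gamma,\V_D)\le p\,\lambda_{1,p}(\Gamma,\V_D)^{1/p}$ for \emph{every} $p$, with no compactness needed; the limit of the eigenvalues then follows immediately, and compactness of $\{u_p\}$ is invoked only afterwards to produce $u_1$. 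You instead obtain the lower bound a posteriori: you first prove a uniform $BV$ bound on $u_p$ itself (via the $L^\infty$ estimate coming from the Dirichlet condition and a bootstrap on $\|u_p'\|_{L^p}$), extract an $L^1$-limit $u_1$, and use lower semicontinuity of the total variation plus admissibility of $u_1$ to close the chain of inequalities. Your route has a real payoff: the paper applies the $BV$ compactness theorem to $\{u_{p_j}\}$ without ever establishing a uniform $BV$ bound on $u_p$ (its estimates concern $v_p=(u_p)^p$, not $u_p$), so your a priori bound fills a step the paper leaves implicit; it also works consistently with the $L^1$ normalization of the statement, whereas the paper's proof silently switches to $\int|u_p|^p=1$. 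The costs of your route are (i) you must argue that $\liminf$ over the extracted subsequence controls the full $\liminf_{p\to1^+}$ (standard: run the compactness argument along a sequence realizing the $\liminf$), and (ii) the entire lower-bound chain now hinges on $u_1$ being admissible, i.e.\ on the $BV$ trace condition at $\V_D$ and on lower semicontinuity of the correct (relaxed) functional — precisely the foundational point you flag. Both proofs also share the same implicit density assumption in the $\limsup$ step (that the infimum over smooth competitors vanishing on $\V_D$ does not exceed $\Lambda_1$), which the paper does not justify either; neither argument is worse off here, but it is the one place where "routine" hides the actual work.
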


\begin{proof} 
	Without loss of generality, we can assume that $u_p(x)\ge0$ for all 
	$x\in\Gamma.$ Let $v_p=(u_p)^p.$  Then 
	$v_p\in W^{1,1}(\Omega)$ and
	\begin{align*}
		\int_{\Gamma} |v_p(x)|\, dx &=1\\
		\int_{\Gamma} |v_p^\prime(x)|\, dx &= p 
		\int_{\Gamma} u(x)^{p-1}|u^\prime(x)|\, dx\\
		&\le 
		p\left(\int_{\Gamma} u(x)^p \, dx\right)^{\nicefrac1{p^\prime}}
		\left(\int_{\Gamma} |u^\prime(x)|^p \, dx\right)^{\nicefrac1{p}}\\
		&=p\left(\int_{\Gamma} |u^\prime(x)|^p \, 
		dx\right)^{\nicefrac1{p}} .
	\end{align*} 
	Hence
		\begin{align*}
			\Lambda_1(\Gamma,V_D) 
			&\leq \dfrac{\| v_p^{\prime} \|_{BV(\Gamma)}}
			{\|v_p \|_{L^1(\Gamma)}}\leq p\dfrac{\left(\displaystyle \int_\Gamma |u_p^{\prime}(x)|^p
		\, dx\right)^{\nicefrac{1}{p}}}
		{\displaystyle\left(\int_\Gamma |u_p(x) |^p\, dx\right)^{
		\nicefrac{1}{p}} } \\ 
		&=p\lambda_{1,p}(\Gamma,\V_D)^{\nicefrac1p}.
	\end{align*}
		From where we get
	\begin{equation} \label{ineq.1}
		\Lambda_1(\Gamma,\V_D) \leq 
		\liminf_{p\to 1^+} \lambda_1(\Gamma,\V_D)^{\nicefrac1p}.
	\end{equation}

	On the other hand, for any smooth function $v$ that vanishes on 
	$\V_D$ we have 
	$$
		\lambda_1(\Gamma,\V_D)^{1/p} \leq 
		\dfrac{\left(\displaystyle\int_\Gamma |v^{\prime}(x)|^p\, dx\right
		)^{1/p} }{\left(\displaystyle\int_\Gamma 
		|v(x) |^p\, dx \right)^{1/p} } 
	$$
	from where it follows
	$$
		\limsup_{p\to 1^+} \lambda_1(\Gamma,\V_D)^{1/p} \leq 
		\dfrac{\displaystyle\int_\Gamma |v^{\prime}(x)|\, dx}
		{\displaystyle\int_\Gamma |v(x) |\, dx  }
	$$
	and we conclude that 
	\begin{equation}\label{ineq.2}
		\limsup_{p\to 1^+} \lambda_1(\Gamma,\V_D)^{\nicefrac1{p}} 
		\leq \Lambda_1(\Gamma,\V_D).
	\end{equation}
	Therefore, from \eqref{ineq.1} and \eqref{ineq.2} we obtain
	\begin{equation}\label{eq}
		\lim_{p\to 1^+} \lambda_1(\Gamma,\V_D) = \Lambda_1(\Gamma,\V_D).
	\end{equation}

	Moreover, by \cite[Theorem 4 Section 5.2.3]{EG} we have that there is 
	$u_1 \in BV (\Gamma)$ such that
	$$
		\| u_{p_j} - u_1\|_{L^1 (\Gamma)} \to 0
	$$
	for a sequence $p_j \to 1^+$. 
	From the lower semicontinuity of the variation measure (see 
	\cite[Theorem 1 Section 5.2.1]{EG}), we have
	$$
		\| u_1 \|_{BV (\Gamma) } \leq \liminf_{{p_j}\to 1}  
		\| u_{p_j} \|_{BV (\Gamma)} .
	$$
	From this we conclude that every possible limit of a sequence of $u_p$ as $p\to 1$ is an 
	extramal for $\Lambda_1(\Gamma,\V_D)$.\qed
	\end{proof}

\begin{teo} \label{teo-uno-cheeger} It holds that 
	$$
		\Lambda_1(\Gamma,\V_D) = 
		\inf\left\{ \dfrac{\mathrm{Per}(D)}{|D |} \colon
		D\subset\Gamma, \, D\cap \V_D=
		\emptyset\right\}.
	$$ 
	\end{teo}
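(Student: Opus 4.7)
The plan is to prove both inequalities by standard Cheeger-type arguments: the upper bound comes from plugging in indicator functions as competitors, and the lower bound is obtained via the coarea formula applied to the level sets of an arbitrary admissible $BV$ function.

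For the inequality $\Lambda_1(\Gamma,\V_D)\le \inf\{\mathrm{Per}(D)/|D|\}$, I would take any admissible subset $D\subset\Gamma$ with $D\cap \V_D=\emptyset$ and finite perimeter, and use $v=\chi_D$ as a test function in the variational characterization of $\Lambda_1(\Gamma,\V_D)$. This function lies in $BV(\Gamma)$ since $D$ has finite perimeter, it vanishes on $\V_D$ (because $D\cap \V_D=\emptyset$), and a direct computation gives $\|v\|_{L^1(\Gamma)}=|D|$ and $\|v'\|_{BV(\Gamma)}=\mathrm{Per}(D)$. Taking the infimum over all admissible $D$ yields the desired bound.

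For the reverse inequality, I would start with any $v\in BV(\Gamma)$ with $v=0$ on $\V_D$ and $v\not\equiv 0$. Replacing $v$ by $|v|$ does not increase the total variation and preserves both the $L^1$-norm and the vanishing condition on $\V_D$, so we may assume $v\ge 0$. Let $h\coloneqq\inf\{\mathrm{Per}(D)/|D|\colon D\subset\Gamma,\ D\cap\V_D=\emptyset\}$ and consider the level sets $D_t\coloneqq \{x\in\Gamma\colon v(x)>t\}$ for $t>0$. Since $v$ vanishes on $\V_D$, we have $D_t\cap \V_D=\emptyset$ for every $t>0$, hence $D_t$ is admissible and $\mathrm{Per}(D_t)\ge h\,|D_t|$. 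Applying the coarea formula and the layer-cake identity on $\Gamma$ gives
\[
\|v'\|_{BV(\Gamma)}=\int_0^{\infty}\mathrm{Per}(D_t)\,dt\ge h\int_0^{\infty}|D_t|\,dt = h\,\|v\|_{L^1(\Gamma)},
\]
so $\|v'\|_{BV(\Gamma)}/\|v\|_{L^1(\Gamma)}\ge h$, and taking the infimum over $v$ yields $\Lambda_1(\Gamma,\V_D)\ge h$.

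The main obstacle is the rigorous formulation on a quantum graph: one must justify the coarea formula and the layer-cake identity for $BV$ functions on $\Gamma$, which requires the notion of perimeter for subsets of a metric graph (with the convention that each vertex counted in the boundary contributes with its appropriate multiplicity coming from the incident edges), and one must check that the Dirichlet condition $v=0$ on $\V_D$, interpreted through traces for $BV$ functions, really does force $D_t\cap\V_D=\emptyset$ for all $t>0$. These facts are developed in the reference \cite{ambrosio} quoted in the previous section, so the argument reduces to invoking them in the graph setting.
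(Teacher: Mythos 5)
Your argument is correct and rests on the same two ingredients as the paper's proof: indicator functions of admissible sets as competitors for the upper bound, and the coarea formula combined with Cavalieri's principle on superlevel sets for the lower bound. The one genuine difference is where the level-set argument is applied. You apply it to an \emph{arbitrary} nonnegative competitor $v\in BV(\Gamma)$, deduce $\|v'\|_{BV(\Gamma)}\ge h\,\|v\|_{L^1(\Gamma)}$ directly, and take the infimum; this makes the lower bound self-contained and does not use the existence of a minimizer. The paper instead invokes Theorem \ref{teo-uno} to produce an extremal function $u$ for $\Lambda_1(\Gamma,\V_D)$ and runs the chain $0=\int_0^\infty(\mathrm{Per}(E_t)-\Lambda_1(\Gamma,\V_D)|E_t|)\,dt\ge\int_0^\infty(\mathrm{Per}(E_t)-\lambda|E_t|)\,dt\ge 0$, forcing equality throughout; this is slightly less economical but yields as a by-product that almost every superlevel set of the minimizer satisfies $\mathrm{Per}(E_t)=\lambda|E_t|$, i.e.\ is itself a Cheeger set. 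Both versions share the same technical point, which you at least flag explicitly while the paper leaves it implicit: the admissibility of the superlevel sets ($E_t\cap\V_D=\emptyset$ for $t>0$) must be interpreted through $BV$ traces at the vertices, since pointwise values and the condition $v=0$ on the measure-zero set $\V_D$ are not meaningful for general $BV$ functions.
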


	\begin{proof}
		We have 
		$$
			\Lambda_1(\Gamma,\V_D)  \leq \lambda=  
			\inf\left\{ \dfrac{\mathrm{Per}(D)}{|D |} \colon
			D\subset\Gamma, \, D\cap \V_D=
			\emptyset\right\} . 
		$$
	By Theorem \ref{teo-uno} there exists a function 
	$u \in BV (\Gamma)$, $u\neq 0$, such that
	$$
		\Lambda_1(\Gamma,\V_D)   
		= \dfrac{\| u^{\prime} \|_{BV(\Gamma)}}{\|u \|_{L^1(\Gamma)} }.
	$$
	We can consider without loss of generality that $u \geq 0$. 
	Let 
	$$
		E_t \coloneqq \{x \in \Gamma \colon  u(x) > t\}.
	$$
	We have
	$$
		|u^{\prime}| (\Gamma) = \int_0^{\infty} \mathrm{Per} (E_t)   dt.
	$$
	Hence, we get using Cavalieri's principle, 
	\begin{align*}
		0 &= \| u^{\prime} \|_{BV(\Gamma)} - 
		\Lambda_1(\Gamma,\V_D) \|u \|_{L^1(\Gamma)}\\  
		&=  \int_0^{\infty} ( \mathrm{Per}(E_t) - 
		\Lambda_1(\Gamma,\V_D)  |E_t|)  dt \\
		&\geq \int_0^{\infty} (\mathrm{Per}(E_t) - \lambda  |E_t|)  
		dt \geq 0.
	\end{align*}
	
	Therefore, we conclude that
	for almost every $t \in \mathbb{R}$ 
	(in the sense of the Lebesgue measure on $\mathbb{R}$),
	$$
		\mathrm{Per}(E_t) = \lambda |E_t|
	$$
	and 
	$$
		\lambda = \Lambda_1(\Gamma,\V_D).
	$$\qed
\end{proof}

Sets $D^*$ such that
$$
 	\inf\left\{ \dfrac{\mathrm{Per}(D)}{|D |} \colon
		D\subset\Gamma, \, D\cap \V_D=
		\emptyset\right\} 
		 = \dfrac{\mathrm{Per} (D^*)}{|D^*|} 
$$
are called Cheeger sets. See \cite{parini} and references therein.

\begin{ex}
	 To see that the optimal value $\Lambda_1(\Gamma,\V_D)$ 
	 depends strongly on the geometric configuration of the graph 
	 $\Gamma$, let us consider the following example: let 
	 $\Gamma$ be a simple graph with 4 nodes 
	 (3 of them, the terminal nodes, are in $\V_D$) and 
	 3 edges as the one described by the next figure:
	 
	 \begin{center}
		\begin{tikzpicture}
  			\node  [fill, circle,draw, scale=.5]  (a) at (0,0) {};
  			\node  (a1) at (1.5,-0.2)  { $a$ };
  			\node  [ circle,draw, scale=.5]  (b) at (3,0)  {};
  			\node  [fill, circle,draw, scale=.5]  (c) at (4,-.5) {};
  			\node  (b1) at (3.4,-0.45)  { $b$ };
  			\node  [fill, circle,draw, scale=.5]  (d) at (4,.5) {};
  			\node  (b1) at (3.4,0.45)  { $b$ };
  			\draw[directed,ultra thick] (a) -- (b);
  			\draw[directed,ultra thick] (b) -- (c);
  			\draw[directed,ultra thick]  (b) -- (d);
		\end{tikzpicture}
	\end{center}
	Let us compute 
	\begin{align*}
		\Lambda_1(\Gamma,\V_D)&=
		\inf\left\{\dfrac{\| v^{\prime} \|_{BV(\Gamma)}}
		{\|v \|_{L^1(\Gamma)} }\colon v\in BV(\Gamma), v=0 \mbox{ on } 
		\V_D, v\neq0
		\right\}\\
		&=\inf\left\{ \dfrac{\mathrm{Per}(D)}{|D |} \colon
		D\subset\Gamma, \, D\cap \V_D=
		\emptyset\right\}, 
	\end{align*}
	in this case. As we will see its value 
	(and the corresponding optimal set $D^*$) depends 
	on the lengths $a$ and $b$.

	First, let us compute the value of 
	$\dfrac{\mathrm{Per}(D)}{|D |}$ for $D=\Gamma$. 
	We have
	$$
		|\Gamma|=\ell(\Gamma) = a+2b, \qquad \mbox{and} 
		\qquad \mathrm{Per}(\Gamma) =3.
	$$
	Hence
	$$
	\dfrac{ \mathrm{Per}(\Gamma)}{\ell(\Gamma)} = \dfrac{3}{a+2b}.
	$$

	On the other hand, if we consider $D_a$ 
	the characteristic function of the edge of length $a$ 
	we obtain
	$$
		|D_a| = a, \qquad \mbox{and} \qquad \mathrm{Per}(D_a) =2,
	$$
	and then
	$$
		\dfrac{ \mathrm{Per}(D_a)}{|D_a|} = \dfrac{2}{a}.
	$$

	Now we remark that any other subset $D$ of $\Gamma$ has a ratio 
	$\dfrac{ \mathrm{Per}(D)}{|D|}$ bigger or equal than one of 
	the previous two sets. Therefore,
	we conclude that
	$$
		\Lambda_1(\Gamma,\V_D) =
		\begin{cases}
			\dfrac{3}{a+2b}, &\text{ if }a \leq 4b,\\[10pt]
			\dfrac{2}{a} & \mbox{if } a > 4b.
		\end{cases}
	$$
\end{ex}

\bigskip

{\bf Acknowledgements.} We want to thank Carolina A. Mosquera for her encouragement and several interesting discussions.

\bibliographystyle{amsplain}

\end{document}